\newtheorem{lemma}{Lemma}
\newtheorem{cor}{Corollary}
\newtheorem{theorem}{Theorem}
\author{Patrick Schnider}
\title{A generalization of crossing families}
\date{}
\begin{document}
\maketitle

\begin{abstract}
For a set of points in the plane, a \emph{crossing family} is a set of line segments, each joining two of the points, such that any two line segments cross. We investigate the following generalization of crossing families: a \emph{spoke set} is a set of lines drawn through a point set such that each unbounded region of the induced line arrangement contains at least one point of the point set. We show that every point set has a spoke set of size $\sqrt{\frac{n}{8}}$. We also characterize the matchings obtained by selecting exactly one point in each unbounded region and connecting every such point to the point in the antipodal unbounded region.
\end{abstract}

\section{Introduction}
Let $\mathcal{P}$ be a finite point set in general position (i.e.\ no three points on a line). A \emph{crossing family} is a set of line segments, each joining two of the points, such that any two line segments cross (i.e.\ intersect in their interior). Crossing families were introduced by Aronov et al.\ \cite{Aronov}, who have shown that any set of $n$ points in general position has a crossing family of size $\sqrt{\frac{n}{12}}$. Since then, there have been several results about crossing families \cite{Fulek, Pach}, but even though it is conjectured that any point set in general position has a crossing family of linear size, the bound of Aronov et al.\ is still the best known result.

A point set $\mathcal{A}$ \emph{avoids} a point set $\mathcal{B}$ if no line through two points in $\mathcal{A}$ intersects the convex hull of $\mathcal{B}$. If $\mathcal{B}$ also avoids $\mathcal{A}$, the two sets are called \emph{mutually avoiding}. The bound in \cite{Aronov} on the size of the largest crossing family is proven in two steps: first it is shown that two mutually avoiding sets $\mathcal{A}$ and $\mathcal{B}$, each of size $k$, induce a crossing family of size $k$. Then it is shown that every set of $n$ points in general position contains two mutually avoiding sets of size $\sqrt{\frac{n}{12}}$. In this paper we will follow the same approach, but for a generalization of crossing families.

Bose et al.\ \cite{Bose} have introduced the following generalization of crossing families: A \emph{spoke set of size $k$} for $\mathcal{P}$ is a set $\mathcal{S}$ of $k$ pairwise non-parallel lines such that in each unbounded region of the arrangement defined by the lines in $\mathcal{S}$ there lies at least one point of $\mathcal{P}$. Note that it is easy to obtain a spoke set from a crossing family by slightly rotating the supporting lines of the line segments in the crossing family. Then each endpoint of a line segment in the crossing family lies in a different unbounded region. We will show that every set of $n$ points in general position contains a spoke set of size $\sqrt{\frac{n}{8}}$. To this end, we first translate the notion of spoke sets to the dual setting in Section \ref{Sec:duality}. In Section \ref{Sec:construction} we then use the dual version to construct large spoke sets for the union of two point sets $\mathcal{A}$ and $\mathcal{B}$, where $\mathcal{A}$ avoids $\mathcal{B}$ and $\mathcal{A}$ and $\mathcal{B}$ can be separated by a line. Finally, we show that every point set contains such point sets $\mathcal{A}$ and $\mathcal{B}$ and give bounds on their sizes.

The motivation for the introduction of spoke sets in \cite{Bose} is the fact that with a spoke set of size $k$ for $\mathcal{P}$, one can construct a covering of the edge set of the complete geometric graph drawn on $\mathcal{P}$ with $n-k$ crossing-free spanning trees. The result in this paper thus also improves the previous upper bound of $n-\sqrt{\frac{n}{12}}$ for this problem. However, the original question from \cite{Bose}, whether there is always a spoke set of linear size, remains open.

Another interesting question is whether it is always possible to find a crossing family of size linear in the size of the largest spoke set. Theorem \ref{Thm:charac} is a first step in this direction as it characterizes the matchings obtained from spoke sets and shows that even though they might not all be crossing families, they still satisfy a number of conditions.

For space reasons, we will not be able to give all proofs. Instead, we refer the interested reader to the appendix.

\subsection*{Preliminaries}

Let $\mathcal{S}$ be a spoke set of size $k$ for $\mathcal{P}$. Consider the ordering of $\mathcal{S}=\{\ell_1,\ldots,\ell_k\}$ by increasing slope. Let $U^+_i$ be the unbounded region that lies below $\ell_1,\ldots, \ell_i$ and above $\ell_{i+1},\ldots \ell_k$. Similarly let $U^-_i$ be the unbounded region that lies above $\ell_1,\ldots, \ell_i$ and below $\ell_{i+1},\ldots \ell_k$. We call the regions $U^+_i$ and $U^-_i$ \emph{antipodal}.

Let $\mathcal{Q}$ be a subset of $\mathcal{P}$ that has exactly one point in each unbounded region. Note that then each line of $\mathcal{S}$ is a halving line for $\mathcal{Q}$. The \emph{spoke matching} of $\mathcal{Q}$ is the matching obtained by drawing a straight line segment from each point $p$ in $\mathcal{Q}$ to the unique point $q$ in $\mathcal{Q}$ that lies in the antipodal unbounded region of the spoke set. See Figure \ref{Fig:SpokeSetDefinition} for an example. Note that in a spoke matching, each edge intersects every line of the spoke set. In Section \ref{Sec:matchings} we characterize the geometric matchings that are spoke matchings.

\begin{figure}
\centering
\includegraphics[scale=0.7]{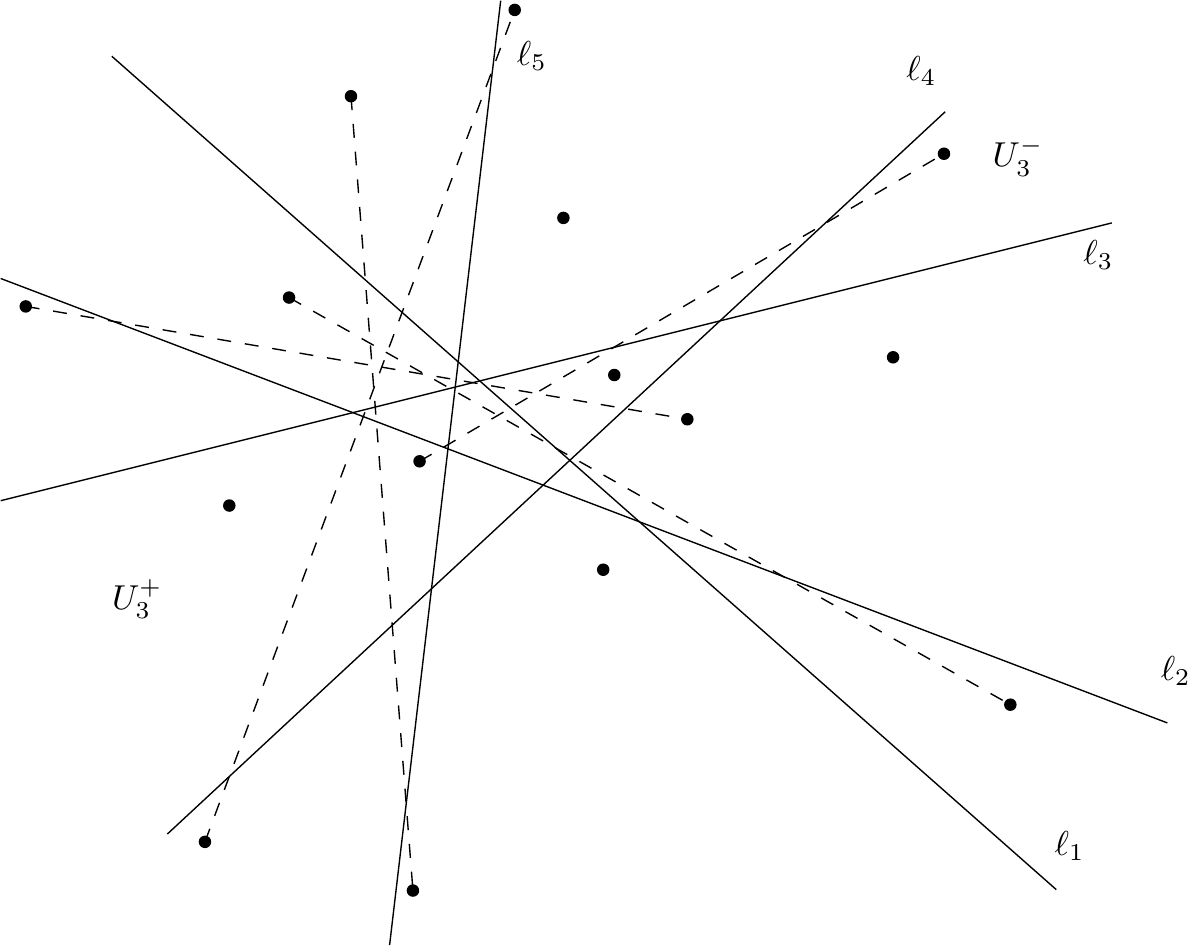}
\caption{A spoke set and a spoke matching (dashed).}
\label{Fig:SpokeSetDefinition}
\end{figure}

\section{Spoke sets under duality}
\label{Sec:duality}
In this section we will translate the properties of spoke sets into the dual setting, that is under the point-line duality. For this we start with some definitions.

Given an arrangement $\mathcal{A}$ of lines, without loss of generality none of them horizontal or vertical, a \emph{cell-path} $R$ is a sequence of cells such that consecutive cells share an edge. If the edge shared by two consecutive cells is a subset of some line $a_i$ of $\mathcal{A}$, we say that $R$ \emph{crosses} $a_i$. The \emph{length} of a cell-path is one less than the number of cells involved. We call a cell-path \emph{line-monotone} if it crosses each line of $\mathcal{A}$ at most once.

If $\mathcal{A'}$ is an arrangement induced by a subset of the lines of $\mathcal{A}$, then $R$ \emph{restricted to} $\mathcal{A'}$ is the cell path obtained by replacing each cell $C$ of $\mathcal{A}$ in $R$ by the cell $C'$ in $\mathcal{A'}$ with $C\subseteq C'$ and deleting consecutive multiples.

Finally, for a cell-path $R=(C_0,C_1,\ldots,C_k)$, let $a_i$ be the line in $\mathcal{A}$ that contains the edge shared by $C_{i}$ and $C_{i+1}$. We call the pair $(a_{2j},a_{2j+1})$ \emph{AB-alternating}, if $C_{2j}$ lies below $a_{2j}$ and $C_{2j+1}$ lies above $a_{2j+1}$ or the other way around, that is $C_{2j}$ lies above $a_{2j}$ and $C_{2j+1}$ lies below $a_{2j+1}$. We call a cell path $P=(C_0,C_1,\ldots,C_{2k})$ \emph{AB-semialternating} if for every $j<k$ the pair $(a_{2j},a_{2j+1})$ is AB-alternating. See Figure \ref{Fig:SpokePathsDefinition} for an example.

\begin{figure}
\centering
\includegraphics[scale=0.7]{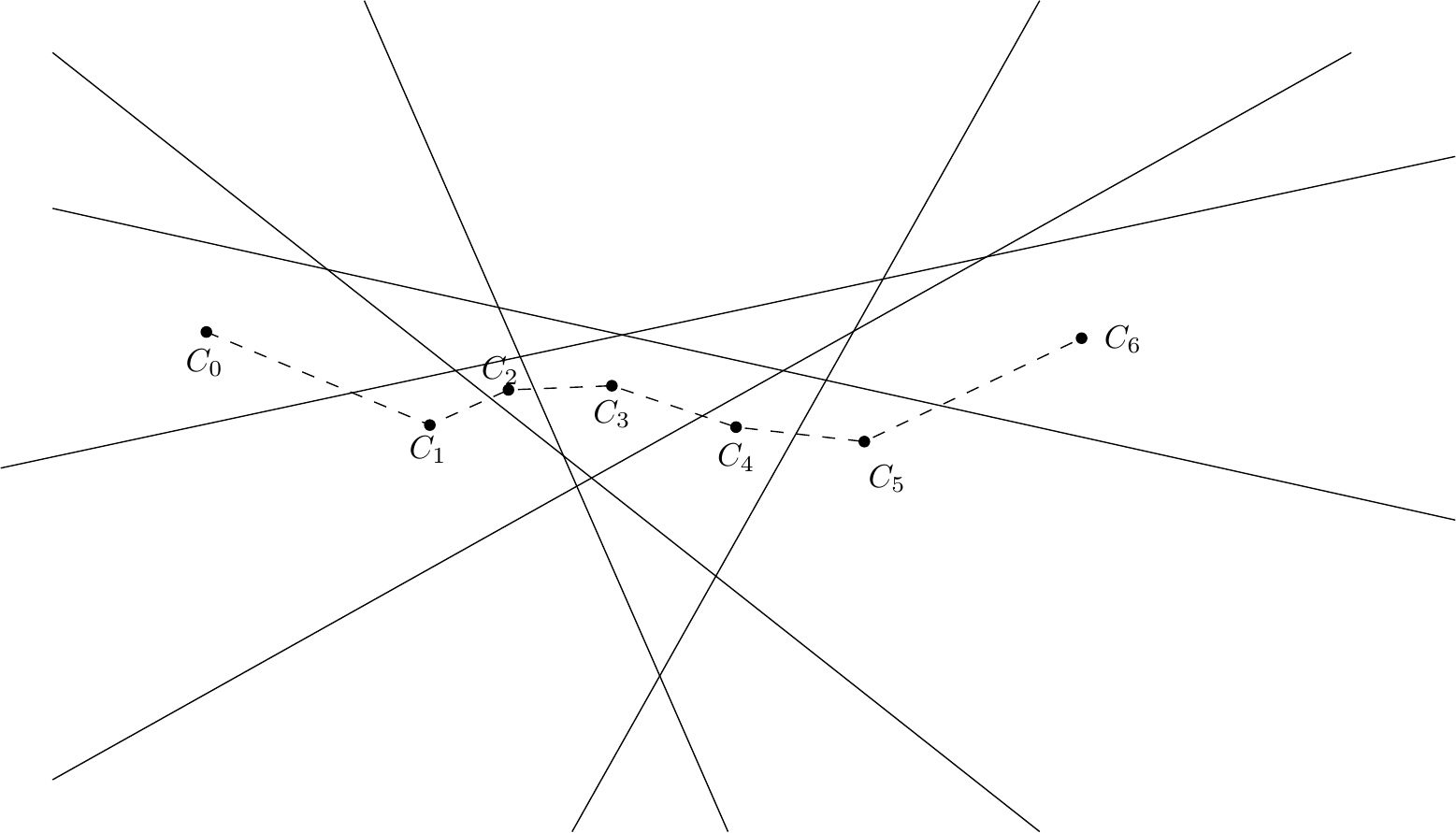}
\caption{A line-monotone AB-semialternating cell-path of length 6.}
\label{Fig:SpokePathsDefinition}
\end{figure}

We now have all the vocabulary that is necessary to describe the dual of spoke sets: given an arrangement $\mathcal{A}$ of lines, a \emph{spoke path} $(R,\mathcal{A'})$ is a cell-path $R$ together with an arrangement $\mathcal{A'}$ induced by a subset of the lines of $\mathcal{A}$, such that $R$ restricted to $\mathcal{A'}$ is line-monotone and AB-semialternating. The length of a spoke path $(R,\mathcal{A'})$ is the length of $R$ restricted to $\mathcal{A'}$.
Note that all the definitions generalize to $x$-monotone pseudoline arrangements.

\begin{lemma}
\label{Lem:dual}
Let $\mathcal{P}$ be a point set and $\mathcal{P^*}$ its dual line arrangement. Then $\mathcal{P}$ contains a spoke set of size $k$ if and only if $\mathcal{P^*}$ contains a spoke path of length $2k$.
\end{lemma}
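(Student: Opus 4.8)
The plan is to build a dictionary between the two sides using a standard incidence-preserving point--line duality $p\mapsto p^*$, $\ell\mapsto\ell^*$ that preserves the above/below relation, and then to realize the spoke path explicitly in one direction and read it back off in the other. Write $\mathcal{S}=\{\ell_1,\dots,\ell_k\}$ sorted by increasing slope; then the dual points $\ell_1^*,\dots,\ell_k^*$ have strictly increasing $x$-coordinates, and (by a generic rotation, which is WLOG) we may assume these are pairwise distinct and that no line of $\mathcal{P}^*$ is horizontal or vertical. The crucial first step is the following translation, obtained by dualizing the definition ``$q$ below $\ell_1,\dots,\ell_i$ and above $\ell_{i+1},\dots,\ell_k$'' entry by entry: for $0\le i\le k$, a point $q$ lies in $U^+_i$ iff the line $q^*$ passes above the points $\ell_1^*,\dots,\ell_i^*$ and below $\ell_{i+1}^*,\dots,\ell_k^*$, and $q\in U^-_i$ iff $q^*$ passes below $\ell_1^*,\dots,\ell_i^*$ and above $\ell_{i+1}^*,\dots,\ell_k^*$. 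Thus the $2k$ unbounded regions correspond exactly to the $2k$ ways a line can split $\{\ell_1^*,\dots,\ell_k^*\}$ into an $x$-prefix and its complementary $x$-suffix with a prescribed side for the prefix; in particular the dual lines of the points in $U^+_0$ and $U^-_0$ pass below, respectively above, all of $\ell_1^*,\dots,\ell_k^*$.

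For the forward direction I would take a spoke set $\mathcal{S}$, choose one point $q_R\in\mathcal{P}$ in each unbounded region $R$ to form a $2k$-point set $\mathcal{Q}$, and set $\mathcal{A}':=\mathcal{Q}^*\subseteq\mathcal{P}^*$. Let $R$ be a cell-path of $\mathcal{P}^*$ that is $x$-monotone and passes in turn through the cells containing $\ell_1^*,\dots,\ell_k^*$. By the translation, the dual line of the point in $U^+_j$ (resp. $U^-_j$) separates $\{\ell_1^*,\dots,\ell_j^*\}$ from $\{\ell_{j+1}^*,\dots,\ell_k^*\}$, hence is crossed by $R|_{\mathcal{A}'}$ exactly once, between the cells of $\ell_j^*$ and $\ell_{j+1}^*$; the two dual lines coming from $U^\pm_0$ lie entirely below, respectively above, all $\ell_j^*$, hence are each crossed exactly once, at the two ends of $R$. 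So $R|_{\mathcal{A}'}$ crosses each of the $2k$ lines of $\mathcal{A}'$ exactly once: it is line-monotone of length $2k$, and its cells are (an extreme cell, then) the cell of $\ell_1^*$, an intermediate cell, the cell of $\ell_2^*$, and so on. AB-semialternation then follows from the translation: the cell of $\ell_j^*$ lies below the dual line of the point of $U^+_j$ and above the dual line of the point of $U^-_j$ (and analogously for the two extreme steps), so crossing one line of a consecutive pair lands us in the intermediate cell that sits on the opposite side of the second line --- which is precisely the alternating pattern. A local check (using that crossing a single line flips exactly one coordinate of a cell's sign vector, plus the genericity assumption) confirms the listed sequence of cells is a genuine cell-path.

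For the converse, let $(R,\mathcal{A}')$ be a spoke path of length $2k$. Since $R|_{\mathcal{A}'}$ has $2k$ edges and is line-monotone it crosses $2k$ distinct lines $a_0,\dots,a_{2k-1}$ of $\mathcal{A}'$, each exactly once; let $\mathcal{Q}\subseteq\mathcal{P}$ be the $2k$ points dual to these lines. After replacing $R$ by an $x$-monotone cell-path with the same restriction to $\mathcal{A}'$, pick a point $\ell_j^*$ in the interior of the cell of $R|_{\mathcal{A}'}$ containing the $j$-th ``turning'' cell, with $x(\ell_1^*)<\dots<x(\ell_k^*)$, and let $\ell_j$ be its primal line; the $\ell_j$ are pairwise non-parallel. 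Line-monotonicity forces, for each $m$, all of $\ell_1^*,\dots,\ell_k^*$ lying in cells of index $\le m$ to lie on one side of $a_m$ and the rest on the other; AB-semialternation pins down which side the relevant cell lies on; and feeding this into the translation shows that the pair $(a_{2\ell},a_{2\ell+1})$ contributes one point in $U^+_{\ell+1}$ and one in $U^-_{\ell+1}$ (with the two extreme steps contributing the points of $U^-_0$ and $U^+_0$). As $\ell$ ranges over $0,\dots,k-1$ this hits every one of the $2k$ unbounded regions, so $\{\ell_1,\dots,\ell_k\}$ is a spoke set of size $k$ for $\mathcal{Q}$, hence for $\mathcal{P}$.

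I expect the main obstacle to be the two ``realizability'' steps rather than the combinatorial bookkeeping: in the forward direction, checking that the prescribed sequence of cells is an actual cell-path of $\mathcal{P}^*$ (consecutive cells sharing an edge), and in the converse, justifying that a line-monotone AB-semialternating cell-path may be taken $x$-monotone without changing its restriction to $\mathcal{A}'$, so that the $\ell_j^*$ can be chosen in $x$-order. Both rest on the genericity assumptions and on the single-coordinate-flip property of adjacent cells. Finally, since every argument uses only the cyclic/left-to-right order in which an $x$-monotone curve meets the lines, and never their straightness, the whole proof carries over verbatim to $x$-monotone pseudoline arrangements, as noted in the paper.
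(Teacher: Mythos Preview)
Your plan follows the same route as the paper: dualize, use that each spoke-set line is a halving line of the chosen $2k$-point set $\mathcal{Q}$ so that its dual lies in a middle-level cell of $\mathcal{Q}^*$, and read the spoke path off as a walk through those cells. The dictionary you set up is exactly the one the paper uses. The gap is a parity problem in how you lay out $R|_{\mathcal{Q}^*}$. By putting the $\ell_j^*$ at the \emph{odd} positions (extreme, $\ell_1^*$, intermediate, $\ell_2^*$, \ldots, $\ell_k^*$, extreme) and the two $U_0^{\pm}$-duals ``at the two ends'', the definitional AB-pair $(a_{2j},a_{2j+1})$ straddles the $\ell_{j+1}^*$-cell: $a_{2j}$ is one of the $U_j^{\pm}$-duals and $a_{2j+1}$ one of the $U_{j+1}^{\pm}$-duals. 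Your alternation argument (``the cell of $\ell_j^*$ lies below the $U_j^+$-dual and above the $U_j^-$-dual, so crossing one lands on the opposite side of the other'') establishes alternation for the two lines \emph{within a single gap}, i.e.\ for the pairs $(a_{2j-1},a_{2j})$ --- the wrong parity for the definition. If you try to repair this by choosing the intermediate cells consistently, a short case check shows that the first pair $(a_0,a_1)$ and the last pair $(a_{2k-2},a_{2k-1})$ can never both be AB-alternating, because each end contributes a single $U_0^{\pm}$-dual which has \emph{all} $\ell_j^*$ on one side. The same parity issue recurs in your converse: if your ``turning'' cells are the odd ones, they lie at levels $k\pm 1$, not in the middle level, and the bijection between crossed lines and the regions $U_i^{\pm}$ fails.

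The paper avoids this by putting the spoke-line duals at the \emph{even} positions. It adjoins an artificial $(k{+}1)$-st line $s_k$ (a slight rotation of $s_0$) so that $s_0^*,\ldots,s_k^*$ occupy $C_0,C_2,\ldots,C_{2k}$, with $C_0$ and $C_{2k}$ the two unbounded middle-level cells of $\mathcal{Q}^*$, and connects consecutive $s_i^*$ by straight segments. Now the AB-pair $(a_{2i},a_{2i+1})$ is exactly the two lines crossed between two consecutive middle-level cells; since both endpoints sit at the same level, one crossing goes up and the other down --- AB-alternation automatically, independent of which of the two is crossed first. The converse runs at the same parity: AB-semialternation plus line-monotonicity force $C_0$ and $C_{2k}$ to be the antipodal unbounded middle-level cells, hence all even cells lie in the middle level, and one picks the spoke lines with duals there. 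Your construction becomes correct with essentially this one change (equivalently: cross both $U_0^{\pm}$-duals at the \emph{same} end rather than one at each end), but as written the alternation step does not go through.
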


\begin{proof}
First assume that $\mathcal{P^*}$ contains a spoke path $(R,\mathcal{Q^*})$ of length $2k$. Let $\mathcal{Q}$ be the primal of $\mathcal{Q^*}$. We want to show that $\mathcal{Q}$, and thus $\mathcal{P}$, contains a spoke set of size $k$. Let $R$ restricted to $\mathcal{Q^*}$ be $(C_0,C_1,\ldots,C_{2k})$. Let $s_i$ be a line in the primal such that its dual point lies in $C_i$. Let $\mathcal{S}=\{s_0,s_2,s_4,\ldots,s_{2j},\ldots,s_{2k-2}\}$. Our goal is to show that $\mathcal{S}$ is a spoke set.

As $R$ crosses each line of $\mathcal{Q^*}$ exactly once, half of them from below and half of them from above, the cells $C_0$ and $C_{2k}$ must be the unbounded cells of the middle level of the line arrangement $\mathcal{Q^*}$. Assume without loss of generality that $C_0$ is the left unbounded cell of the middle level. Then $s_0$ and $s_{2k}$ are lines in the primal that are halving lines for $\mathcal{Q}$ and have the same subsets of $\mathcal{Q}$ as semispaces, with $s_0$ having negative slope and $s_{2k}$ having positive slope. Also, as $R$ restricted to $\mathcal{Q^*}$ is AB-semialternating, the cells $C_0, C_2, \ldots, C_{2j}, \ldots, C_{2k}$ must all be in the middle level, with $C_{2j-2}$ being to the left of $C_{2j}$ for every $j$. In particular, the lines $s_0,s_2,s_4,\ldots,s_{2j},\ldots,s_{2k-2}$ in the primal are ordered by increasing slope.

Let $U^+_i$ be the unbounded region in the primal that lies below $s_0, s_2. \ldots, s_{2i}$ and above $s_{2i+2},\ldots s_{2j}$. Similarly let $U^-_i$ be the unbounded region that lies above $s_0, s_2. \ldots, s_{2i}$ and below $s_{2i+2},\ldots s_{2j}$. Let $a^*_i$ and $b^*_i$ be the lines in the dual that $R$ crosses between $C_{2i}$ and $C_{2i+2}$ and assume without loss of generality that $a^*_i$ is crossed from above and $b^*_i$ from below. In particular, all the cells $C_0, \ldots, C_{2i}$ lie above $a^*_i$ and below $b^*_i$, whereas all the cells $C_{2i+2},\ldots C_{2j}$ lie below $a^*_i$ and above $b^*_i$. This implies that $a_i$, the primal of $a^*_i$, lies above $s_0, s_2. \ldots, s_{2i}$ and below $s_{2i+2},\ldots s_{2j}$, i.e. in $U^-_i$. Similarly, $b_i$, the primal of $b^*_i$, lies in $U^+_i$. Thus no unbounded region is empty and $S$ is indeed a spoke set.

Now, assume that $\mathcal{P}$ contains a spoke set $s_0, \ldots, s_{k-1}$ of size $k$, ordered by increasing slope. Assume without loss of generality that $s_0$ is vertical. Slightly rotate the configuration in counter-clockwise direction, such that $s_0$ has negative slope. Let $s_k$ be a line generated by slightly rotating $s_0$ clockwise until it has positive slope. Let $\mathcal{Q}$ be a subset of $\mathcal{P}$ that has exactly one point in each unbounded region and let $\mathcal{Q^*}$ be its dual line arrangement. Let $C_{2i}$ be the cell in $\mathcal{Q^*}$ that contains the dual of $s_i$. Then the cells $C_0, C_2, \ldots, C_{2j}, \ldots, C_{2k}$ are all in the middle level, with $C_{2j-2}$ being to the left of $C_{2j}$ for every $j$. For every $i$, draw a straight edge $e_i$ between the dual of $s_i$ and the dual of $s_{i+1}$. This edge intersects exactly 2 lines $a^*_i$ and $b^*_i$ from $\mathcal{Q^*}$. Assume without loss of generality that $e_i$ does not go through the intersection of $a^*_i$ and $b^*_i$ (otherwise curve it slightly). Then $e_i$ intersects exactly 3 cells of $\mathcal{Q^*}$, namely $C_{2i}, C_{2i+2}$, and a third cell, which we denote by $C_{2i+1}$. As both $C_{2i}$ and $C_{2i+2}$ are in the middle level, $e_i$ must cross one of $a^*_i$ and $b^*_i$ from above and the other one from below. Also, the union of all $e_i$'s is monotone with respect to the $x$-axis, i.e.\ no line of $\mathcal{Q^*}$ is crossed more than once. Thus $(C_0, C_1,\ldots, C_{2k})$ is line-monotone and AB-semialternating.
\end{proof}

\section{Finding large spoke sets}
\label{Sec:construction}
In this section, we will construct large spoke sets by constructing long spoke paths in the dual arrangement. We do this for the special case of so called avoiding sets, and then show that every point set contains large avoiding subsets.

Recall that a point set $\mathcal{A}$ \emph{avoids} a point set $\mathcal{B}$ if no line through two points in $\mathcal{A}$ intersects the convex hull of $\mathcal{B}$. Let $\mathcal{A}$ and $\mathcal{B}$ be two disjoint point sets such that $\mathcal{A}$ avoids $\mathcal{B}$ and $\mathcal{A}$ and $\mathcal{B}$ can be separated by a line $\ell$. Let $\mathcal{A^*}$ be the dual arrangement of $\mathcal{A}$ and let $\mathcal{B^*}$ be the dual arrangement of $\mathcal{B}$. Let $\mathcal{P}$ be the union of $\mathcal{A}$ and $\mathcal{B}$ and let $\mathcal{P^*}$ be its dual arrangement.

Assume without loss of generality that $\ell$ is vertical and that $\mathcal{A}$ lies to the left of $\ell$ (and thus $\mathcal{B}$ to the right). Consider the dual arrangement $\mathcal{P^*}$ and let $C_0$ be the left unbounded region of the middle level. Then $C_0$ lies above all lines of $\mathcal{A^*}$ and below all lines of $\mathcal{B^*}$. Also, as $\mathcal{A}$ avoids $\mathcal{B}$, all lines of $\mathcal{B^*}$ intersect the lines of $\mathcal{A^*}$ in the same order. Draw $\mathcal{B^*}$ as a wiring diagram with color red such that all the pseudolines are between the horizontal lines $y=-\alpha$ and $y=\alpha$ for some $\alpha>0$. Then draw $\mathcal{A^*}$ with blue pseudolines that are vertical between $y=-\alpha$ and $y=\alpha$ such that the resulting pseudoline-arrangement $\mathcal{P}_0$ is isomorphic to $\mathcal{P^*}$. As all lines of $\mathcal{B^*}$ intersect the lines of $\mathcal{A^*}$ in the same order, this can always be done. We call such a drawing an \emph{extended diagram}. See Figure \ref{Fig:ExtendedDiagram} for an Example. Note that the cell corresponding to $C_0$ is under all red pseudolines and left of all blue pseudolines.

\begin{figure}
\label{Fig:ExtendedDiagram}
\centering
\includegraphics[scale=0.7]{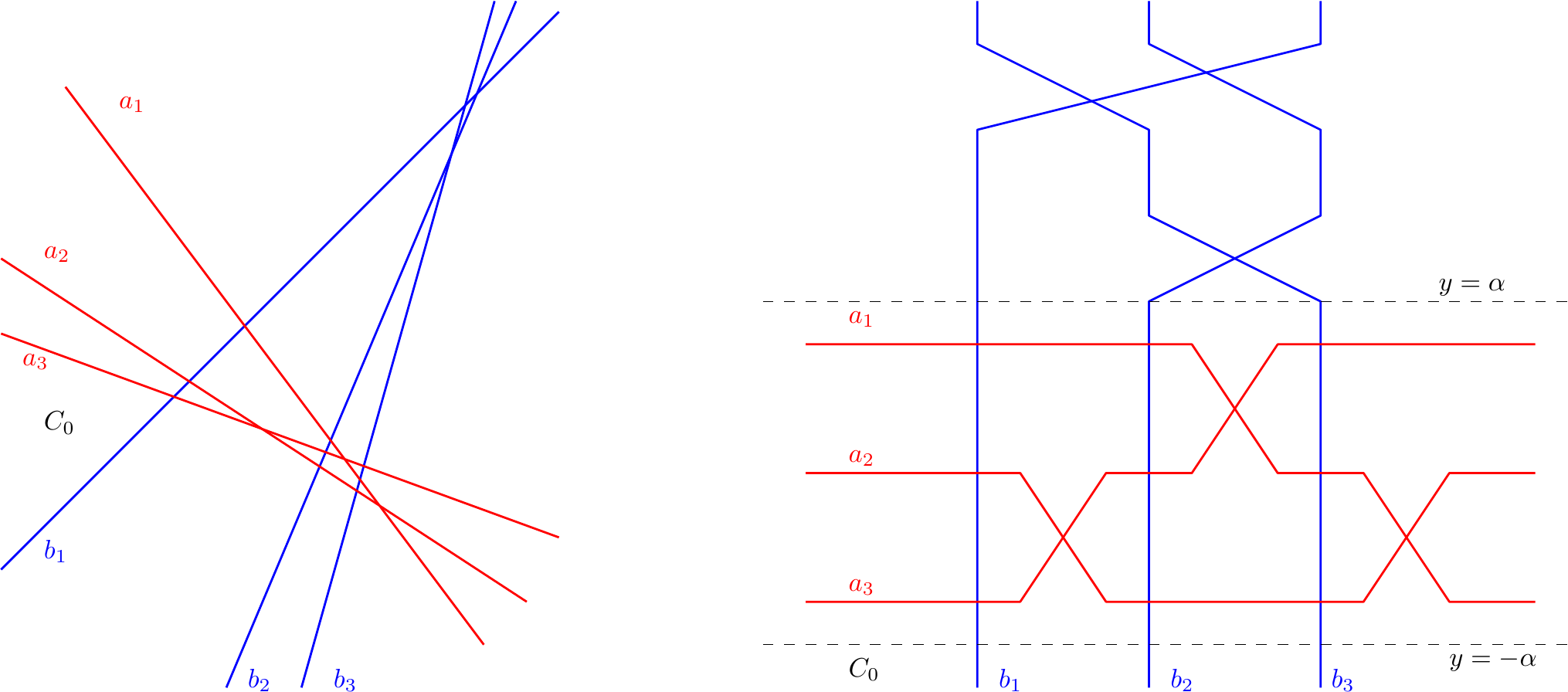}
\caption{A line arrangement and its extended diagram.}
\end{figure}

\begin{lemma}
\label{Lem:wiring}
Let $\mathcal{A}$ and $\mathcal{B}$ be two disjoint point sets of size $k$ such that $\mathcal{A}$ avoids $\mathcal{B}$ and $\mathcal{A}$ and $\mathcal{B}$ can be separated by a line. Let $\mathcal{P}=\mathcal{A}\cup\mathcal{B}$. Then the dual arrangement $\mathcal{P^*}$ contains a spoke path of length $k+2$, if $k$ is even, or $k+3$, if $k$ is odd.
\end{lemma}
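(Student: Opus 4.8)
The goal is to build, inside the extended diagram $\mathcal{P}_0$, a cell-path $R$ and a subarrangement $\mathcal{A'}$ so that $R$ restricted to $\mathcal{A'}$ is line-monotone and AB-semialternating of the required length. The natural choice is to take $\mathcal{A'}$ to be \emph{all} of $\mathcal{P}_0$ (so that ``restricted to $\mathcal{A'}$'' is vacuous and we just need $R$ itself to be line-monotone and AB-semialternating), and to let $R$ start in the cell $C_0$ that sits below all red pseudolines of $\mathcal{B^*}$ and left of all blue pseudolines of $\mathcal{A^*}$. From $C_0$ the path should first traverse the red wiring diagram of $\mathcal{B^*}$ from bottom to top (crossing each of the $k$ red pseudolines exactly once), and then traverse the blue vertical pseudolines of $\mathcal{A^*}$ from left to right (crossing each of the $k$ blue pseudolines exactly once). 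Such a path is automatically line-monotone since it crosses every pseudoline exactly once.

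**Making it AB-semialternating.**
The constraint that needs care is the AB-semialternating condition, which couples consecutive crossings in \emph{pairs} $(a_{2j}, a_{2j+1})$: within each pair one crossing must go from below to a line and the next from above, or vice versa. Here is where the geometry of the extended diagram helps. The $C_0$ cell is below all red lines, so every time the path crosses a red pseudoline going upward it goes from below that line; I want to alternate those ``from below'' red crossings with ``from above'' crossings. I would route $R$ through the red wiring diagram not by climbing straight up, but in a staircase that alternately crosses a red line from below and then the next red line from above — this is possible because inside a wiring diagram one can always find a short cell-path that reaches a cell above a given pseudoline and below the next. After all $k$ red lines are crossed we are in the cell above all red lines and left of all blue lines; similarly, since $C_0$ is left of all blue lines, crossing a blue (vertical) line from the left means crossing it ``from below'' in the orientation convention (blue lines are steep), and I pair up blue crossings the same way. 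The two extra crossings in the $k+2$ count come from a short ``transition'' gadget needed to connect the red phase to the blue phase with the correct parity, and when $k$ is odd an extra pair is needed because a single leftover crossing cannot be completed into an AB-alternating pair, forcing the $k+3$ bound.

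**The main obstacle.**
The hardest part is verifying that the staircase routing can actually be carried out simultaneously with the pairing structure: I must show that at each stage the cell I need to be in (above red line $i$, below red line $i+1$, on the correct side of the already-crossed blue lines) is a genuine cell of the extended diagram and is reachable from the previous cell by crossing exactly the intended pseudoline. This uses crucially that $\mathcal{B^*}$ is drawn as a wiring diagram confined to the strip $-\alpha \le y \le \alpha$ and that $\mathcal{A^*}$ is vertical inside that strip, so that red and blue crossings do not interfere — every blue line meets every red line, and the order is consistent by the avoiding hypothesis. A careful bookkeeping of which cell of the middle level we occupy after each crossing, together with a parity argument counting how many crossings of each color have happened, then yields exactly a path of length $k+2$ (even $k$) or $k+3$ (odd $k$) that is line-monotone and AB-semialternating, i.e.\ a spoke path of that length. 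I would present the construction by first treating $k$ even, drawing the explicit staircase, and then noting the single modification (one more detour crossing a pair of lines) that handles odd $k$.
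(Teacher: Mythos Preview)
Your plan has a genuine gap at the very first step. You propose to cross all $k$ red pseudolines first (going from below them all to above them all) and then all $k$ blue pseudolines. But if a line-monotone path starts below every red line and ends above every red line, crossing each exactly once, then \emph{every} red crossing is from below to above: the number of red lines beneath you increases by one at an upward crossing and decreases by one at a downward crossing, so net $+k$ over $k$ crossings forces all of them to be upward. Hence your ``staircase that alternately crosses a red line from below and then the next red line from above'' is impossible. The same argument shows every blue crossing is from above to below. Consequently the only AB-alternating pair in your path is the single $(r,b)$ at the interface, and the path is AB-semialternating of length $2$, not $k+2$.

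Equivalently, in terms of the colour sequence: a path that does ``all red first, then all blue'' has colour sequence $r^{k}b^{k}$, whose longest semialternating subsequence has length $2$. Restricting to a subarrangement cannot help here, because any order-preserving subsequence of $r^{k}b^{k}$ is again of the form $r^{a}b^{c}$.

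The paper's proof avoids this obstruction by never separating the colours into two phases. Instead it lets the guiding curve $g$ follow the bottommost red pseudoline $r_{1}$, whose own colour sequence already interleaves $b$'s and $r$'s because $r_{1}$ crosses both the blue verticals and the other red wires as it traverses the diagram. The real work is then a potential argument: one deforms the extended diagram by local ``single crossing'' and ``split crossing'' moves until $c(r_{1})$ becomes perfectly alternating, tracks two companion curves $g_{1},g_{2}$ through the reverse moves, and shows that $\phi(g_{1})+\phi(g_{2})\ge 2k+4$ because at most $k-2$ of the moves can decrease the potential. None of this machinery appears in your outline, and without some mechanism that forces red and blue crossings to interleave along the path, the target length $k+2$ (or $k+3$) cannot be reached.
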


\begin{proof}
Consider the extended diagram of $\mathcal{P^*}$. We will now draw a directed pseudoline $g$, representing a cell path, through this extended diagram. The pseudoline $g$ will start in the cell $C_0$ and end in the antipodal unbounded cell (that is, the cell corresponding to the right unbounded region of the middle level in $\mathcal{P^*}$). The cell-path will be given by the cells $g$ intersects. As $g$ is a pseudoline, this cell-path is certainly line-monotone. In order to get a long spoke path, we also try to draw $g$ in such a way, that within the sequence of crossed pseudolines we have many alternations between red and blue pseudolines. More formally, we define the \emph{color sequence} $c(g)$ of $g$ by moving along $g$ and writing for each crossing with another pseudoline an $r$ or a $b$ if the crossed pseudoline is red or blue, respectively. Thus, the color sequence is of the form $x_1, \ldots, x_{2k}$, with $x_i\in\{b,r\}$. We define $\phi(g)$ as the length of the longest \emph{semialternating} subsequence $x'_1,\ldots,x'_j$ of the color sequence of $g$, that is, the longest subsequence that has the following properties:
\begin{itemize}
\item $j$ is even, i.e.\ $j=2m$;
\item for every $i\leq m$ we have that $x'_{2i}=r\Leftrightarrow x'_{2i+1}=b$.
\end{itemize}
Then the cell path given by $g$, restricted to the arrangement of pseudolines that contibute to the crossings in the subsequence $x'_1,\ldots,x'_j$ is also AB-semialternating.

The way we will construct $g$ is the following: we modify the extended diagram in a sequence of steps until we reach a certain goal diagram where we can guarantee the existence of two different pseudolines $g_1$ and $g_2$ with $\phi(g_1)=\phi(g_2)=2k$. We then reverse the modifications and change $g_1$ or $g_2$ if necessary. When we reach the original diagram, we will be able to argue that either $\phi(g_1)$ or $\phi(g_2)$ are still large enough.

More precisely, we get a sequence $\mathcal{P}_0, \mathcal{P}_1, \ldots, \mathcal{P}_h$ of extended diagrams where each $\mathcal{P}_{i+1}$ can be generated from $\mathcal{P}_i$ by one of two moves, that will be described soon. For each $\mathcal{P}_i$ we construct two different pseudolines $g_1^{(i)}$ and $g_2^{(i)}$ in such a way that allows us to argue that $\phi(g_1^{(h)})=\phi(g_1^{(h)})=2k$ and $\phi(g_1^{(0)})+\phi(g_1^{(0)})=2k+4$.

We start by describing the goal diagram $\mathcal{P}_h$. Let $r_1^{(0)}$ be the bottommost pseudoline at left infinity of the wiring diagram of $\mathcal{B^*}$ and let $r_1^{(i)}$ be the corresponding pseudoline in $\mathcal{P}_i$. Consider $c(r_1^{(h)})$, i.e.\ the color sequence of $r_1^{(h)}$ in the goal diagram $\mathcal{P}_h$. We want to have $c(r_1^{(h)})=brbrbr\ldots brb$ (note that $c(r_1^{(h)})$ has length $2k-1$). We define $g_1^{(h)}$ as a pseudoline which crosses $r_1^{(h)}$ first and then always stays in distance $\epsilon$ to it, where $\epsilon$ is an arbitrarily small positive number. Similarly we define $g_2^{(h)}$ as a pseudoline which always stays in distance $\epsilon$ to $r_1^{(h)}$ but crosses it at the end. Then $g_1^{(h)}$ and $g_2^{(h)}$ have the color sequences $c(g_1^{(h)})=rbrbrbr\ldots brb$ and $c(g_2^{(h)})=brbrbr\ldots brbr$, i.e.\ $\phi(g_1^{(h)})=\phi(g_2^{(h)})=2k$. See Figure \ref{Fig:GoalDiagram} for an illustration.

\begin{figure}
\label{Fig:GoalDiagram}
\centering
\includegraphics[scale=0.7]{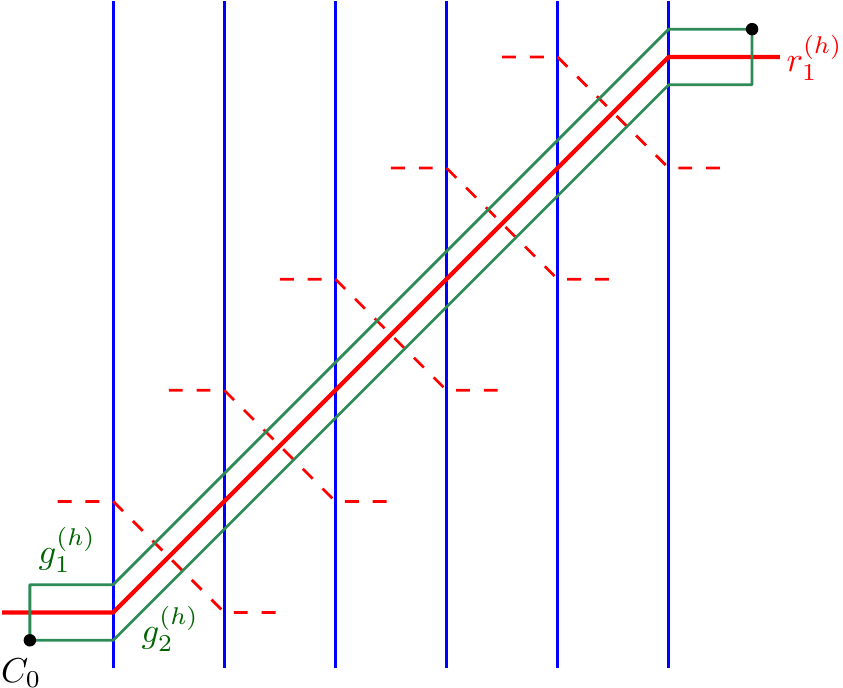}
\caption{The goal diagram $\mathcal{P}_h$ with the pseudolines $g_1^{(h)}$ and $g_2^{(h)}$.}
\end{figure}

Now we describe the moves that we allow. There are two different moves, see Figures \ref{Fig:SingleMove} and \ref{Fig:SplitMove} for illustrations of the moves.

\begin{itemize}
\item \textbf{Single crossing move:} We first describe the \emph{right} single crossing move. This move can be used when at some place in the arrangement $r_1^{(i)}$ crosses a blue pseudoline $b_1^{(i)}$, then a red pseudoline $r_2^{(i)}$ and then two blue pseudolines $b_2^{(i)}$ and $b_3^{(i)}$. Let $v_{s,t}$ denote the point of intersection of the blue pseudoline $b_s^{(i)}$ and the red pseudoline $r_t^{(i)}$. The move is as follows:
\begin{enumerate}
\item Delete the parts of $r_1^{(i)}$ between $v_{1,1}$ and $v_{2,1}$, as well as between $v_{2,1}$ and $v_{3,1}$. Delete the parts of $r_2^{(i)}$ between $v_{1,2}$ and $v_{2,2}$, as well as between $v_{2,2}$ and $v_{3,2}$.
\item Connect $v_{1,1}$ to $v_{2,2}$ and $v_{2,2}$ to $v_{3,1}$ to get a new red pseudoline $r_1^{(i+1)}$. Connect $v_{1,2}$ to $v_{2,1}$ and $v_{2,1}$ to $v_{3,2}$ to get a new red pseudoline $r_2^{(i+1)}$.
\end{enumerate}
Note that the arrangement of red pseudolines is again a wiring diagram. Also, no blue line has changed, hence the new arrangement is again an extended diagram. The color sequence of $r_1$ has changed from $c(r_1^{(i)})=\ldots brbb\ldots$ to $c(r_1^{(i+1)})=\ldots bbrb\ldots$. The \emph{left} single crossing move is defined mirrored, changing the color sequence of $r_1$ from $c(r_1^{(i)})=\ldots bbrb\ldots$ to $c(r_1^{(i+1)})=\ldots brbb\ldots$.

\item \textbf{Split crossing move:} We first describe the \emph{right} split crossing move. This move can be used when at some place in the arrangement $r_1^{(i)}$ crosses a blue pseudoline $b_1^{(i)}$, then at least two red pseudolines $r_2^{(i)},\ldots, r_m^{(i)}$ and then two blue pseudolines $b_2^{(i)}$ and $b_3^{(i)}$ somewhere in the arrangement. Let again $v_{s,t}$ denote the point of intersection of the blue line $b_s^{(i)}$ and the red pseudoline $r_t^{(i)}$. The move is as follows:
\begin{enumerate}
\item Delete the parts of $r_1^{(i)}$ between $v_{1,1}$ and $v_{2,1}$, as well as between $v_{2,1}$ and $v_{3,1}$. Delete the parts of $r_m^{(i)}$ between $v_{1,m}$ and $v_{2,m}$, as well as between $v_{2,m}$ and $v_{3,m}$.
\item Connect $v_{1,1}$ to $v_{2,m}$ and $v_{2,m}$ to $v_{3,1}$ to get a new red pseudoline $r_1^{(i+1)}$. Connect $v_{1,m}$ to $v_{2,1}$ and $v_{2,1}$ to $v_{3,m}$ to get a new red pseudoline $r_m^{(i+1)}$.
\end{enumerate}
As above, the arrangement of red pseudolines is again a wiring diagram. Again, no blue pseudoline has changed and thus the new arrangement is again an extended diagram. The color sequence of $r_1$ has changed from $c(r_1^{(i)})=\ldots rrbb\ldots$ to $c(r_1^{(i+1)})=\ldots rbrb\ldots$. The \emph{left} split crossing move is again defined mirrored, changing the color sequence of $r_1$ from $c(r_1^{(i)})=\ldots bbrr\ldots$ to $c(r_1^{(i+1)})=\ldots brbr\ldots$.
\end{itemize}

\begin{figure}
\label{Fig:SingleMove}
\centering
\includegraphics[scale=0.7]{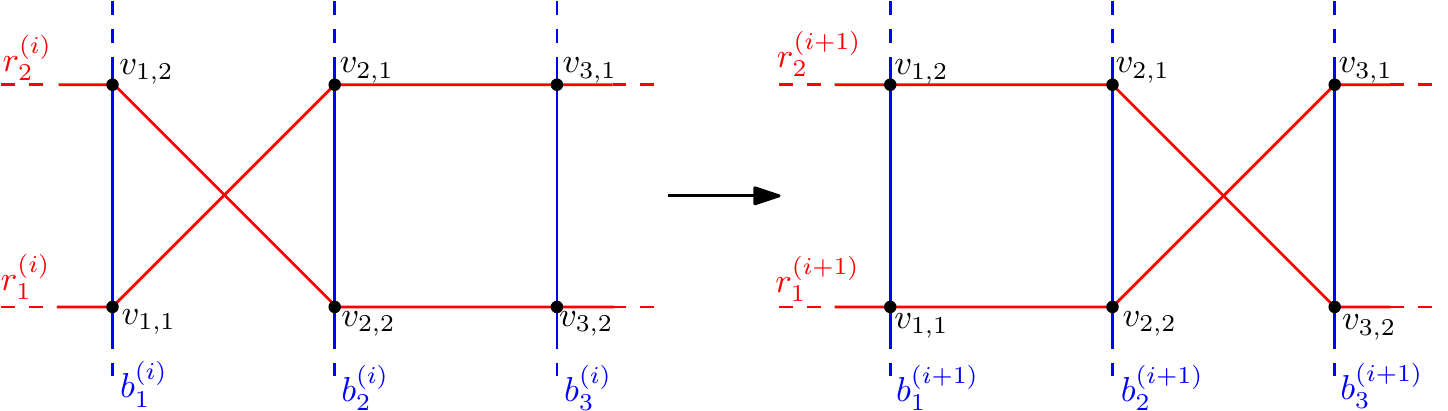}
\caption{A single crossing move.}
\end{figure}

\begin{figure}
\label{Fig:SplitMove}
\centering
\includegraphics[scale=0.7]{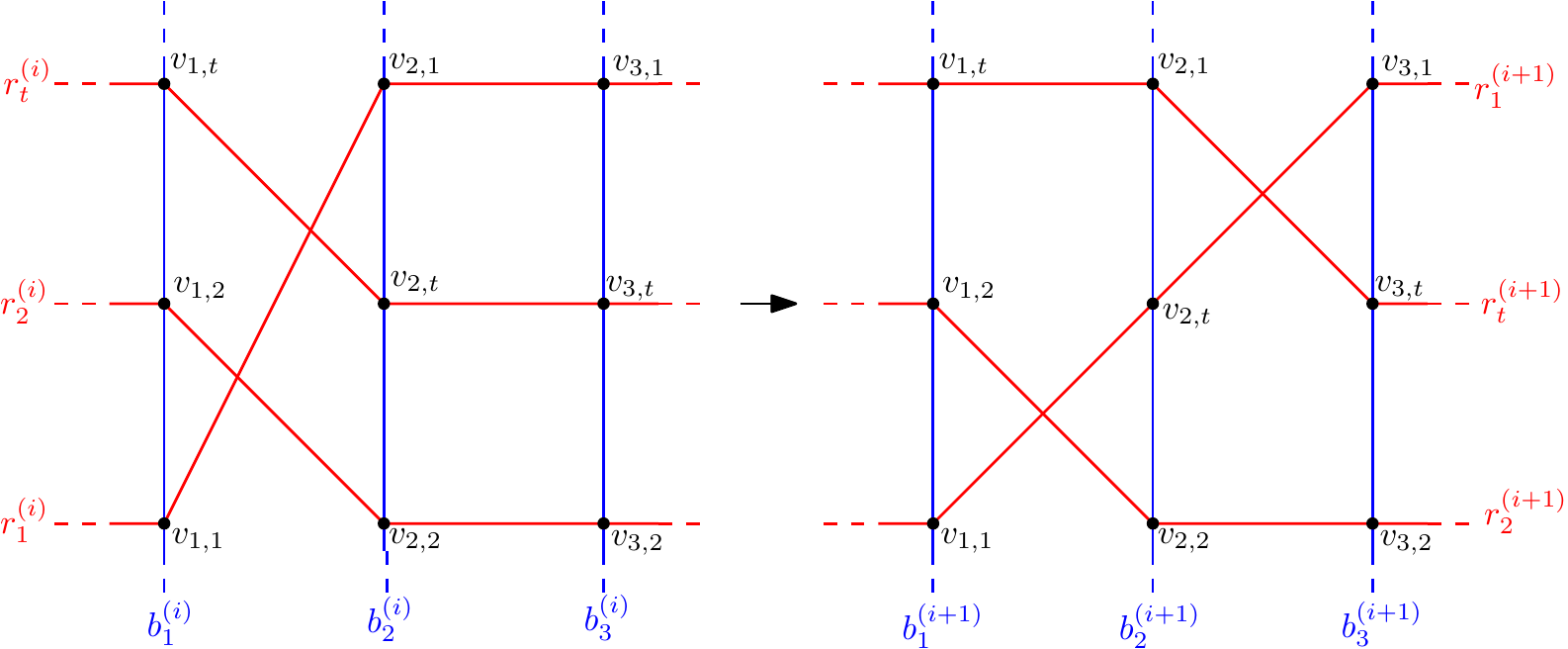}
\caption{A split crossing move.}
\end{figure}

It is easy to see that using these moves we can change the color sequence of $r_1$ from any arbitrary starting sequence with $k$ $b$'s and $k-1$ $r$'s to the sequence $c(r_1^{(h)})=brbrbr\ldots brb$. As in a split crossing move we split two consecutive $r$'s and no move joins two $r$'s, we conclude that among the moves we need to do so, there are at most $k-2$ split crossing moves. Summing up, we thus get a sequence $\mathcal{P}_0, \mathcal{P}_1, \ldots, \mathcal{P}_h$ of extended diagrams where each $\mathcal{P}_{i+1}$ can be generated from $\mathcal{P}_i$ by either a single crossing move or a split crossing move. In $\mathcal{P}_h$ we can now draw the pseudolines $g_1^{(h)}$ and $g_2^{(h)}$ as above. It remains to construct the pseudolines $g_1^{(i)}$ and $g_2^{(i)}$ for every other extended diagram $\mathcal{P}_i$. Given $g_1^{(i+1)}$ and $g_2^{(i+1)}$, we construct $g_1^{(i)}$ and $g_2^{(i)}$ as follows:

\begin{itemize}
\item \textbf{If $\mathcal{P}_{i+1}$ was generated from $\mathcal{P}_i$ by a single crossing move:} Assume without loss of generality that a right single crossing move was used (otherwise mirror the construction). As $r_1^{(i)}$ is never below $r_1^{(i+1)}$, we can set $g_2^{(i)}:=g_2^{(i+1)}$. If $g_1^{(i+1)}$ crosses $r_1^{(i)}$ only once, we also set $g_1^{(i)}:=g_1^{(i+1)}$. If however $g_1^{(i+1)}$ crosses $r_1^{(i)}$ three times, let $v_1$ be a point on $g_1^{(i+1)}$ slightly before the second crossing and let $v_2$ be a point on $g_1^{(i+1)}$ slightly after the third crossing. Delete the part of $g_1^{(i+1)}$ that lies between $v_1$ and $v_2$ and connect the two points by a pseudoline-segment that always lies at distance $\epsilon$ above $r_1^{(i)}$. The resulting pseudoline is $g_1^{(i)}$.

Note that $c(g_2^{(i)})=c(g_2^{(i+1)})$ and thus $\phi(g_2^{(i)})=\phi(g_2^{(i+1)})$. For $g_1$, the color sequence might change from $c(g_1^{(i+1)})=\ldots bbrb \ldots$ to $c(g_1^{(i)})=\ldots brbb \ldots$. However, this neither increases nor decreases the length of the longest semialternating subsequence, and so $\phi(g_1^{(i)})=\phi(g_1^{(i+1)})$. See Figure \ref{Fig:SingleMoveReverse} for an illustration.

\item \textbf{If $\mathcal{P}_{i+1}$ was generated from $\mathcal{P}_i$ by a split crossing move:} Assume again without loss of generality that a right split crossing move was used (otherwise mirror the construction). Also in this case $r_1^{(i)}$ is never below $r_1^{(i+1)}$, and so we set $g_2^{(i)}:=g_2^{(i+1)}$. Like before, if $g_1^{(i+1)}$ crosses $r_1^{(i)}$ only once, we also set $g_1^{(i)}:=g_1^{(i+1)}$. If however $g_1^{(i+1)}$ crosses $r_1^{(i)}$ three times, again let $v_1$ be a point on $g_1^{(i+1)}$ slightly before the second crossing and let $v_2$ be a point on $g_1^{(i+1)}$ slightly after the third crossing. As we did for the single crossing move, delete the part of $g_1^{(i+1)}$ that lies between $v_1$ and $v_2$ and connect the two points by a pseudoline-segment that always lies at distance $\epsilon$ above $r_1^{(i)}$. The resulting pseudoline is $g_1^{(i)}$.

Note that also in this case we have $c(g_2^{(i)})=c(g_2^{(i+1)})$ and thus $\phi(g_2^{(i)})=\phi(g_2^{(i+1)})$. For $g_1$, the color sequence changes from $c(g_1^{(i+1)})=\ldots rbrb \ldots$ to $c(g_1^{(i)})=\ldots rrbb \ldots$. Thus the longest semialternating subsequence of $c(g_1^{(i)})$ might contain one $rb$-term less than the one of $c(g_1^{(i+1)})$, but we can still ensure that $\phi(g_1^{(i)})\geq\phi(g_1^{(i+1)})-2$. See Figure \ref{Fig:SplitMoveReverse} for an illustration.
\end{itemize}

\begin{figure}
\label{Fig:SingleMoveReverse}
\centering
\includegraphics[scale=0.7]{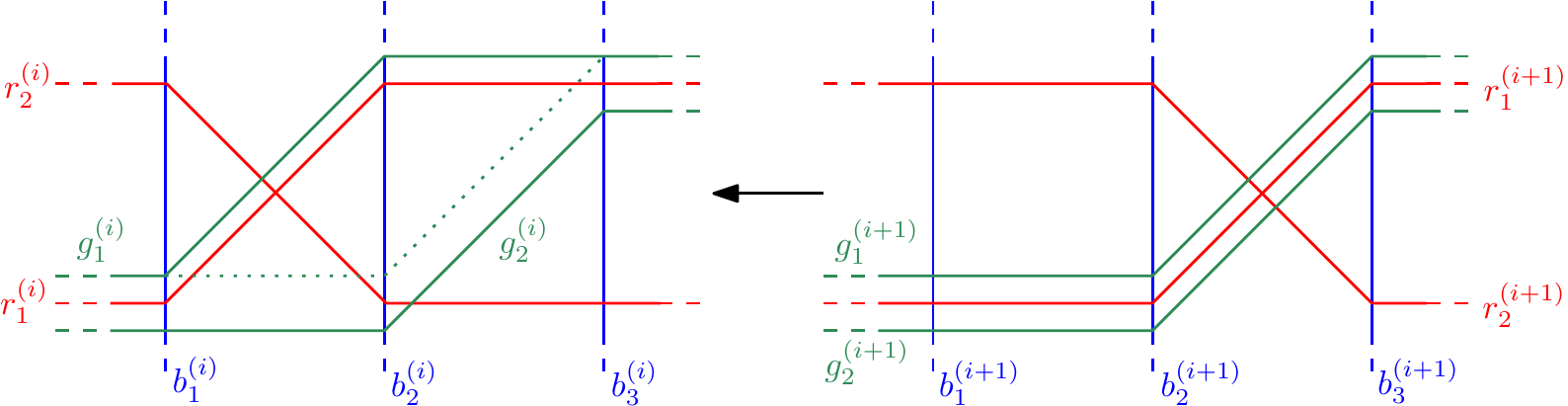}
\caption{Reversing a single crossing move.}
\end{figure}

\begin{figure}
\label{Fig:SplitMoveReverse}
\centering
\includegraphics[scale=0.7]{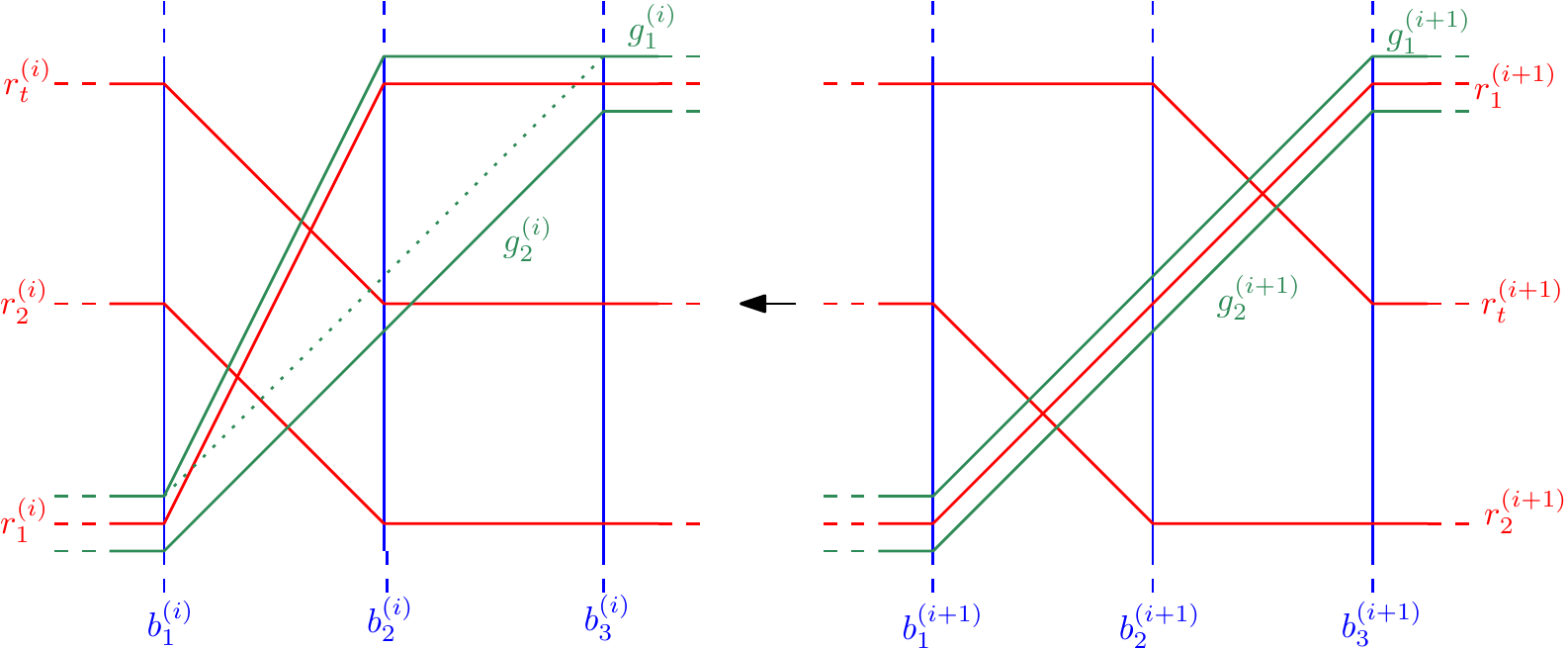}
\caption{Reversing a split crossing move.}
\end{figure}

To conclude the proof, let $\Phi(i):=\phi(g_1^{(i)})+\phi(g_2^{(i)})$. From the construction above we can see that $\Phi(i)=\Phi(i+1)-2$ only if $\mathcal{P}_{i+1}$ was generated from $\mathcal{P}_i$ by a split crossing move. Otherwise we have $\Phi(i)=\Phi(i+1)$. As we have $\Phi(h)=4k$ and we used no more than $k-2$ split crossing moves, we conclude that $\Phi(0)\geq 4k-(k-2)\cdot2=2k+4$. Then by the pigeonhole principle, remembering that $\phi(g)$ is always an even number by definition, for $j=1$ or $j=2$ we have $\phi(g_i^{(0)})\geq k+2$, if $k$ is even, or $\phi(g_i^{(0)})\geq k+3$, if $k$ is odd.
\end{proof}

\begin{cor}
\label{Cor:avoidspoke}
If a point set $\mathcal{P}$ contains two subsets $\mathcal{A}$ and $\mathcal{B}$ of size $k$, such that $\mathcal{A}$ avoids $\mathcal{B}$ and $\mathcal{A}$ and $\mathcal{B}$ can be separated by a line, then $\mathcal{P}$ contains a spoke set of size $\lceil\frac{k}{2}\rceil+1$.
\end{cor}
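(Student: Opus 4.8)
The plan is straightforward: translate the hypothesis into the dual via Lemma~\ref{Lem:dual}, invoke Lemma~\ref{Lem:wiring} to produce a spoke path, and then translate back. Concretely, since $\mathcal{A}$ and $\mathcal{B}$ are subsets of $\mathcal{P}$ of size $k$ with $\mathcal{A}$ avoiding $\mathcal{B}$ and the two separated by a line, Lemma~\ref{Lem:wiring} applied to $\mathcal{A}\cup\mathcal{B}\subseteq\mathcal{P}$ tells us that the dual arrangement $(\mathcal{A}\cup\mathcal{B})^*$ contains a spoke path of length $k+2$ if $k$ is even and $k+3$ if $k$ is odd. In either case this length is at least $k+2$, and more precisely it equals $2\lceil k/2\rceil+2$: indeed for even $k$ we have $k+2 = 2(k/2)+2$, and for odd $k$ we have $k+3 = 2\cdot\frac{k+1}{2}+2 = 2\lceil k/2\rceil+2$.

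Next I would note that a spoke path in $(\mathcal{A}\cup\mathcal{B})^*$ is in particular a spoke path in $\mathcal{P}^*$: the arrangement $\mathcal{P}^*$ contains all the lines of $(\mathcal{A}\cup\mathcal{B})^*$, and a spoke path is defined relative to a sub-arrangement $\mathcal{A}'$ of the ambient arrangement, so we may simply take the same cell-path (lifted to the refinement $\mathcal{P}^*$, then restricted back down) together with the same sub-arrangement $\mathcal{A}'$. Hence $\mathcal{P}^*$ contains a spoke path of length $2\lceil k/2\rceil+2$.

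Finally, apply the reverse direction of Lemma~\ref{Lem:dual}: a spoke path of length $2m$ in $\mathcal{P}^*$ yields a spoke set of size $m$ for $\mathcal{P}$. With $2m = 2\lceil k/2\rceil + 2$, i.e.\ $m = \lceil k/2\rceil+1$, this gives the claimed spoke set of size $\lceil\tfrac{k}{2}\rceil+1$.

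I do not expect a genuine obstacle here; the only mild subtlety is making sure the parity bookkeeping in Lemma~\ref{Lem:wiring} is phrased uniformly as $2\lceil k/2\rceil+2$ so that Lemma~\ref{Lem:dual} applies with an even length, and checking that ``spoke path in a sub-arrangement'' really does promote to ``spoke path in $\mathcal{P}^*$'' — but this is immediate from the definitions, since adding lines to the ambient arrangement only subdivides cells and the restriction operation in the definition of a spoke path undoes exactly that subdivision. So the corollary follows by chaining the two lemmas together.
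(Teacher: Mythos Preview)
Your proof is correct and follows exactly the paper's approach: the paper's own proof is simply ``Combine Lemma~\ref{Lem:dual} and Lemma~\ref{Lem:wiring}'', and you have spelled out precisely that combination, including the parity bookkeeping rewriting the spoke-path length as $2\lceil k/2\rceil+2$. The extra remark about promoting a spoke path from $(\mathcal{A}\cup\mathcal{B})^*$ to $\mathcal{P}^*$ is a fair point to make explicit, but it is indeed immediate from the definitions as you note.
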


\begin{proof}
Combine Lemma \ref{Lem:dual} and Lemma \ref{Lem:wiring}
\end{proof}

Modifying the proof of Aronov et al.\ \cite{Aronov} for finding mutually avoiding sets in a point set, we can prove the following theorem:

\begin{theorem}
\label{Thm:avoid}
Every point set of size $n$ contains two point sets $\mathcal{A}$ and $\mathcal{B}$ of size $\lfloor\sqrt{\frac{n}{2}+1}-1\rfloor$ such that $\mathcal{A}$ avoids $\mathcal{B}$ and $\mathcal{A}$ and $\mathcal{B}$ can be separated by a line.
\end{theorem}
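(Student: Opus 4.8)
The plan is to adapt the two-step halving-line argument from Aronov et al.\ \cite{Aronov}. The starting point is the standard fact that any point set of size $n$ in general position admits a \emph{halving line} $\ell$ whose two open sides each contain $\lfloor n/2 \rfloor$ (or $\lceil n/2\rceil$) points; after a rotation we may take $\ell$ vertical, so we have two sets $L$ and $R$ of roughly $n/2$ points each, separated by a vertical line. The goal is to extract from $L$ a subset $\mathcal A$ and from $R$ a subset $\mathcal B$, each of size about $\sqrt{n/2}$, with $\mathcal A$ avoiding $\mathcal B$ and the two still separated by a line (which they automatically are, since $\mathcal A\subseteq L$ lies left of $\ell$ and $\mathcal B\subseteq R$ lies right of $\ell$).

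The core combinatorial step is a ham-sandwich / iterated-halving argument on one side. Concentrate on $R$, of size $m=\lfloor n/2\rfloor$. First I would pick the point $q_1\in R$ that is \emph{extreme} in the direction pointing away from $L$ — say the rightmost point of $R$ — or more robustly, repeatedly use the following: given a current candidate set, a line through two of its points that is ``bad'' (intersects the convex hull of the other side) must separate the two sides in a way we can recurse on. Concretely, the Aronov et al.\ technique builds $\mathcal B=\{q_1,\dots,q_k\}$ greedily: having chosen $q_1,\dots,q_{i}$, one shows that among the remaining points of $R$ lying in the appropriate wedge (the region ``visible'' past $q_i$ and not cut off by lines $q_jq_{j+1}$), a $1/(i{+}1)$-fraction survives, so that after $k$ steps at least $m/k!$... — no, the correct bookkeeping is the quadratic one: each of the $k$ chosen points ``uses up'' at most a $1/k$ slice, and one arranges that $k$ slices of size $m/k$ each suffice, giving $k \approx \sqrt m \approx \sqrt{n/2}$. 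Running the symmetric procedure on $L$ simultaneously (or afterwards, restricting attention to the sub-wedge already fixed) yields $\mathcal A$ of the same size with $\mathcal B$ avoiding $\mathcal A$; one extra halving ensures mutual avoidance, costing only a constant factor which is absorbed into the $\sqrt{n/2+1}-1$ expression and the floor.

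More precisely, the clean version I would write: sort the convex layers / use an order-type argument so that $\mathcal A$ consists of points of $L$ chosen so that $L$'s convex hull as seen from $R$'s hull behaves monotonically, then bound the loss at each of the $k$ selection steps by a factor $(1-1/k)$ or an additive $m/k$, and solve $k\cdot (m/k) \le m$ together with the requirement that $k$ points actually remain, which forces $k(k+1)\le m+$const, i.e.\ $k\le \sqrt{m+\tfrac14}-\tfrac12$, and then weaken the constants to the stated $k=\lfloor\sqrt{n/2+1}-1\rfloor$ (valid since $m=\lfloor n/2\rfloor \ge n/2 - 1/2$ makes $\sqrt{n/2+1}-1$ a safe lower bound for the exact threshold).

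The main obstacle I anticipate is the precise accounting in the greedy selection so that the \emph{avoiding} property is maintained at every step while the surviving point count stays above the running threshold — this is exactly where Aronov et al.\ do careful work, and where the separating-line hypothesis must be exploited: because $\mathcal A$ and $\mathcal B$ sit on opposite sides of $\ell$, any line through two points of $\mathcal A$ either misses $\mathrm{conv}(\mathcal B)$ entirely or crosses $\ell$ inside a controlled interval, which is what lets the wedges nest and the fractions multiply correctly. Getting the one-sided ``avoids'' first and then upgrading to the constant-factor-cheaper ``mutually avoiding'' — in contrast to \cite{Aronov}, where mutual avoidance is needed — is what saves us here, since Corollary \ref{Cor:avoidspoke} only requires one-directional avoidance, so I would actually stop at the one-sided version and not pay for the upgrade at all, which is why the bound $\lfloor\sqrt{n/2+1}-1\rfloor$ beats the $\sqrt{n/12}$-type bound. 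I would then double-check the floor and the ``$+1$'' against small cases to make sure the constant is tight enough to feed into Corollary \ref{Cor:avoidspoke} and yield the abstract's $\sqrt{n/8}$.
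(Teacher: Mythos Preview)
Your sketch is not a proof: the ``greedy wedge'' selection is never made precise, and the accounting you gesture at (``each of the $k$ chosen points uses up at most a $1/k$ slice'') is not something you can verify from what you have written. More importantly, even if made rigorous, that framework is the one that produces Aronov et al.'s $\sqrt{n/12}$; nothing in your outline explains where the improvement to $\sqrt{n/2+1}-1$ would come from beyond the bare assertion that one-sided avoidance is ``cheaper''.

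The paper's argument is structurally different and rests on an ingredient you never mention: the Erd\H{o}s--Szekeres monotone-subsequence theorem. Instead of a single halving line, it uses three lines: two vertical lines $\ell_1,\ell_2$ cutting off small sets $\mathcal L,\mathcal R$ of size $\lfloor 2t\rfloor$ on the far left and far right (leaving a large middle strip $\mathcal M$), and then a Ham--Sandwich line $\ell_3$ bisecting $\mathcal L$ and $\mathcal R$ simultaneously. After an affine map making $\ell_3$ horizontal, one applies Erd\H{o}s--Szekeres to the points of $\mathcal M$ above $\ell_3$, ordered by $x$-coordinate, to extract a subset $\mathcal A$ of size at least $\sqrt{(n-4t)/2}$ whose $y$-coordinates are, say, decreasing. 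Take $\mathcal B:=\mathcal L_b$, the lower-left cell, which has size at least $\lfloor t\rfloor$. The avoiding property is now a one-line observation rather than something to maintain inductively: every line through two points of $\mathcal A$ has negative slope and passes through the upper-right region, so it can never enter the lower-left cell containing $\mathcal B$. Solving $\sqrt{(n-4t)/2}=t$ yields $t=\sqrt{n/2+1}-1$. The missing idea, then, is to replace iterative selection by a single application of Erd\H{o}s--Szekeres inside a grid cell; this is exactly what turns ``$\mathcal A$ avoids $\mathcal B$'' from a delicate invariant into an immediate geometric consequence of monotonicity, and it is where the better constant comes from.
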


\begin{proof}
Let $t:=\sqrt{\frac{n}{2}+1}-1$. Move a vertical line $\ell_1$ from $x=-\infty$ to the right until $\lfloor 2t\rfloor$ points are to the left of it. Call the set of these $\lfloor 2t\rfloor$ points $\mathcal{L}$. Do the same from the right, that is, find a vertical line $\ell_2$ that has $\lfloor 2t\rfloor$ points to the right of it. Call the set of these $\lfloor 2t\rfloor$ points $\mathcal{R}$. Denote the set of the remaining points, i.e.\ the points that lie between $\ell_1$ and $\ell_2$ by $\mathcal{M}$. Note that $\mathcal{M}$ has size $n-2\lfloor 2t\rfloor$. Perform a Ham-Sandwich-cut on $\mathcal{L}$ and $\mathcal{R}$. The resulting line $\ell_3$ splits any point set $\mathcal{S}$, $\mathcal{S}\in\{\mathcal{L},\mathcal{R},\mathcal{M}\}$, into $\mathcal{S}_a$ and $\mathcal{S}_b$, where $\mathcal{S}_a$ lies above $\ell_3$ and $\mathcal{S}_b$ lies below $\ell_3$. Note that $|\mathcal{L}_a|$, $|\mathcal{L}_b|$, $|\mathcal{R}_a|$ and $|\mathcal{R}_b|$ all have size at least $\lfloor t\rfloor$. Assume without loss of generality that $|\mathcal{M}_a|\geq |\mathcal{M}_b|$. In particular, $|\mathcal{M}_a|\geq\lceil\frac{n}{2}-2t\rceil$. Apply an affine transformation such that $\ell_1$ and $\ell_2$ remain vertical, and $\ell_3$ becomes horizontal. Order the set $\mathcal{M}_a$ from left to right. A well known result states that any sequence of real numbers of length $r$ contains either an ascending or a descending subsequence of length $\sqrt{r}$. Apply this result to the $y$-coordinates of $\mathcal{M}_a$ to find a subset $\mathcal{A}$ of $\mathcal{M}_a$ with $|\mathcal{A}|\geq\sqrt{\lceil\frac{n-4t}{2}\rceil}\geq\sqrt{\frac{n-4t}{2}}$ that is without loss of generality descending. Note that $\mathcal{A}$ avoids $\mathcal{L}_b$ and that $\ell_3$ separates these two sets. Setting $\mathcal{B}:=\mathcal{L}_b$ and observing that $\sqrt{\frac{n-4t}{2}}=t$ for $t=\sqrt{\frac{n}{2}+1}-1$ finishes the proof.
\end{proof}

\begin{cor}
Every point set of size $n$ allows a spoke set of size at least $\sqrt{\frac{n}{8}}$.
\end{cor}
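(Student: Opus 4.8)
The plan is to chain together the three results that have been assembled in the paper, namely Corollary~\ref{Cor:avoidspoke} and Theorem~\ref{Thm:avoid}, and then simplify the resulting bound. First I would invoke Theorem~\ref{Thm:avoid}: any point set $\mathcal{P}$ of size $n$ contains two subsets $\mathcal{A}$ and $\mathcal{B}$, each of size $k:=\lfloor\sqrt{\frac{n}{2}+1}-1\rfloor$, such that $\mathcal{A}$ avoids $\mathcal{B}$ and the two sets are separated by a line. Feeding these into Corollary~\ref{Cor:avoidspoke} immediately yields a spoke set of size $\lceil\frac{k}{2}\rceil+1$ for $\mathcal{P}$.

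It then remains to check that $\lceil\frac{k}{2}\rceil+1\geq\sqrt{\frac{n}{8}}$. Since $\lceil\frac{k}{2}\rceil+1\geq\frac{k}{2}+1>\frac{k}{2}$, it suffices to show $\frac{k}{2}\geq\sqrt{\frac{n}{8}}=\frac{1}{2}\sqrt{\frac{n}{2}}$, i.e.\ $k\geq\sqrt{\frac{n}{2}}$. This is not literally true because of the floor, so the cleaner route is to keep the ``$+1$'': we want $\lceil\frac{k}{2}\rceil+1\geq\frac12\sqrt{\tfrac n2}$. Writing $t:=\sqrt{\frac n2+1}-1$ so that $k=\lfloor t\rfloor\geq t-1$, we get $\lceil\frac{k}{2}\rceil+1\geq\frac{k}{2}+1\geq\frac{t-1}{2}+1=\frac{t+1}{2}=\frac12\sqrt{\frac n2+1}\geq\frac12\sqrt{\frac n2}=\sqrt{\frac n8}$. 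So the floor in Theorem~\ref{Thm:avoid} is exactly absorbed by the additive $+1$ coming from Corollary~\ref{Cor:avoidspoke}, and the bound follows.

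The only mild subtlety — and the step I would be most careful about — is this bookkeeping with the floors and ceilings, making sure the $-1$ lost to the floor $\lfloor t\rfloor$ is genuinely compensated by the $+1$ in $\lceil k/2\rceil+1$ rather than being swallowed by a careless inequality; the computation above shows it works out cleanly. There is no real mathematical obstacle here: this corollary is purely the arithmetic wrap-up of the two substantive results (Lemma~\ref{Lem:wiring}'s spoke-path construction, transported to the primal via Lemma~\ref{Lem:dual}, and the Erd\H{o}s--Szekeres-style extraction of avoiding sets in Theorem~\ref{Thm:avoid}). I would present it as a two-line proof: apply Theorem~\ref{Thm:avoid}, apply Corollary~\ref{Cor:avoidspoke}, and conclude with the displayed chain of inequalities $\lceil\frac k2\rceil+1\geq\frac{t+1}{2}=\frac12\sqrt{\frac n2+1}\geq\sqrt{\frac n8}$.
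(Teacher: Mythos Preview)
Your proposal is correct and follows exactly the paper's approach: apply Theorem~\ref{Thm:avoid} to obtain the avoiding pair of size $k=\lfloor\sqrt{n/2+1}-1\rfloor$, apply Corollary~\ref{Cor:avoidspoke} to get a spoke set of size $\lceil k/2\rceil+1$, and finish with the same arithmetic chain $\lceil k/2\rceil+1\geq (t+1)/2=\tfrac12\sqrt{n/2+1}\geq\sqrt{n/8}$. Your bookkeeping with the floor and the ``$+1$'' is in fact slightly cleaner than the displayed inequality in the paper, but the substance is identical.
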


\begin{proof}
By Theorem \ref{Thm:avoid}, there are two sets $\mathcal{A}$ and $\mathcal{B}$ of size $\lfloor\sqrt{\frac{n}{2}+1}-1\rfloor$ such that $\mathcal{A}$ avoids $\mathcal{B}$ and $\mathcal{A}$ and $\mathcal{B}$ can be separated by a line. Thus, by Corollary \ref{Cor:avoidspoke}, the point set contains a spoke set of size
$$\left\lceil\frac{\lfloor\sqrt{\frac{n}{2}+1}-1\rfloor}{2}\right\rceil+1 \geq \left\lceil\sqrt{\frac{n}{8}+\frac{1}{4}}-1\right\rceil+1 \geq \sqrt{\frac{n}{8}}.$$
\end{proof}

It is worth mentioning that there are point sets that have no mutually avoiding subsets of size larger than $\mathcal{O}(\sqrt{n})$ \cite{Valtr}. However, it is not clear whether this still holds if we only insist that one of the subsets avoids the other one. So while there is no hope of finding larger crossing families by finding larger mutually avoiding subsets, it might still be possible to find larger spoke sets with this approach.

\section{Spoke matchings}
\label{Sec:matchings}
In this section we characterize a family of geometric matchings that arise from spoke sets. Consider a point set $\mathcal{P}$ that has a spoke set of size $k$. Let $\mathcal{Q}$ be a subset of $\mathcal{P}$ that has exactly one point in each unbounded region. Recall that the \emph{spoke matching} of $\mathcal{Q}$ is the matching obtained by connecting with a straight line segment each point $p$ in $\mathcal{Q}$ to the unique point $q$ in $\mathcal{Q}$ that lies in the antipodal unbounded region of the spoke set. In order to characterize the geometric properties of spoke matchings, we need a few definitions:

Let $e$ and $f$ be two edges and let $s$ be the intersection of their supporting lines. If $s$ lies in both $e$ and $f$, we say that $e$ and $f$ \emph{cross}. If $s$ lies in $f$ but not in $e$, we say that $e$ \emph{stabs} $f$ and we call the vertex of $e$ that is closer to $s$ the \emph{stabbing vertex} of $e$. If $s$ lies neither in $e$ nor in $f$, or even at infinity, we say that $e$ and $f$ are \emph{parallel}.

A \emph{stabbing chain} are three edges, $e$, $f$ and $g$, where $e$ stabs $f$ and $f$ stabs $g$. We call $f$ the \emph{middle} edge of the stabbing chain.

\begin{theorem}
\label{Thm:charac}
A geometric matching $M$ is a spoke matching if and only if it satisfies the following three conditions:
\begin{description}
\item[(a)] no two edges are parallel,
\item[(b)] if an edge $e$ stabs two other edges $f$ and $g$, then the respective stabbing vertices of $e$ lie inside the convex hull of $f$ and $g$, and
\item[(c)] if there is a stabbing chain, then the stabbing vertex of the middle edge lies inside the convex hull of the other two edges.
\end{description}
\end{theorem}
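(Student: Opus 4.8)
The plan is to prove the two implications separately, after a common geometric preamble. Write $\mathbb{RP}^1$ for the projective line of directions, obtained from the circle $S^1$ of unit vectors by identifying antipodes. The backbone is a \emph{direction lemma}: if $\mathcal S=\{\ell_1,\dots,\ell_k\}$ is a spoke set ordered by slope and $\ell_j$ has direction $\theta_j$, then $\theta_1,\dots,\theta_k$ cut $\mathbb{RP}^1$ into $k$ arcs, the spoke matching has exactly $k$ edges (there are $2k$ unbounded regions, paired antipodally), and these $k$ edges have pairwise distinct directions, exactly one in each arc. Indeed, the edge joining the point of $U^+_i$ to the point of $U^-_i$ crosses $\ell_1,\dots,\ell_i$ from their lower to their upper side and $\ell_{i+1},\dots,\ell_k$ the other way, and intersecting these half-plane constraints on its direction vector leaves precisely the arc bounded by $\theta_i$ and $\theta_{i+1}$. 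I also record the basic \emph{localization}: since every edge $e=pq$ of a spoke matching crosses every line of $\mathcal S$, all those crossings lie on the segment $e$ itself; hence the two rays of the supporting line $L_e$ issuing from $p$ and from $q$ away from the segment meet no line of $\mathcal S$, so each stays inside the unbounded region of the arrangement containing its endpoint.

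For the forward direction, condition (a) is immediate: two edges have distinct directions, so their supporting lines are not parallel, and if these lines met at a point $s$ outside both segments, then $s$ would lie on one of the above rays of $L_e$ and on one of $L_f$, putting the single point $s$ into two different unbounded regions --- impossible. So in a spoke matching every pair of edges crosses or stabs. For (b) and (c) one exploits the localization again: if $e$ stabs $f$, the intersection $s$ of their supporting lines lies beyond the stabbing vertex $u$ of $e$, hence in the unbounded region containing $u$, and it also lies on $f$; combining this information for the two edges stabbed by $e$ (for (b)), respectively following it along the chain (for (c)), together with the cyclic order of the edge directions supplied by the direction lemma, pins down where the stabbing vertices and the points $s$ may lie, and forces the asserted convex-hull containments. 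This part is the most computational; I expect it to reduce to a short case distinction according to the cyclic order of the directions of the edges involved and according to which endpoint of each edge is its stabbing vertex.

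For the converse, suppose a geometric matching $M$ on a $2k$-point set $\mathcal Q$ satisfies (a), (b), (c). By (a) the $k$ edges have pairwise distinct directions $\phi_1<\dots<\phi_k$ in $\mathbb{RP}^1$, dividing it into open arcs $G_1,\dots,G_k$ (cyclically, $G_j$ between $\phi_j$ and $\phi_{j+1}$). The crux is the following claim: for each $j$ there is a line $\ell_j$ of direction in $G_j$ that crosses all $k$ edges of $M$. Granting this, I conclude as follows. The $\ell_j$ have distinct directions, hence are pairwise non-parallel. Each $\ell_j$ crosses every edge, so it separates the two endpoints of each edge; in particular it is a halving line of $\mathcal Q$. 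For an endpoint $p$ of an edge $e$, all crossings of the $\ell_j$ with $e$ lie on the segment $e$, so the ray from $p$ directed away from the other endpoint of $e$ meets no $\ell_j$; thus $p$ lies in an unbounded region of the arrangement, namely the one that is extreme in that ray's direction. The $2k$ ray directions arising in this way are $\{\phi_i,\phi_i+\pi : 1\le i\le k\}$, which strictly interleave around $S^1$ with the $2k$ line directions $\{\theta_j,\theta_j+\pi\}$ of the $\ell_j$; consequently the corresponding $2k$ extreme regions are pairwise distinct, and since an arrangement of $k$ pairwise non-parallel lines has exactly $2k$ unbounded regions, the $2k$ points of $\mathcal Q$ occupy them bijectively. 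Finally the two endpoints of any edge lie on opposite sides of every $\ell_j$, hence in antipodal regions. So $\{\ell_1,\dots,\ell_k\}$ is a spoke set for $\mathcal Q$ whose spoke matching is precisely $M$.

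It remains to establish the transversal claim, which I expect to be the main obstacle of the whole proof. Fix $j$. For a fixed direction $d$, a line of direction $d$ crosses all $k$ edges iff its offset lies in the intersection of the $k$ open intervals obtained by projecting the edges onto a line perpendicular to $d$; by Helly's theorem on the line, that intersection is nonempty iff no two edges project to disjoint intervals, i.e.\ iff no two edges are separated by a line of direction $d$. For a fixed pair $e_a,e_b$ the set of directions $d$ for which \emph{some} line of direction $d$ separates them is an open arc $A_{ab}\subseteq\mathbb{RP}^1$ (empty when the pair crosses). One must show that for every $j$ the union $\bigcup_{a\ne b}A_{ab}$ fails to cover $G_j$. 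This is where conditions (a), (b), (c) do their work: using the direction lemma I would argue that a line of direction inside $G_j$ separating two edges of $M$ realizes a configuration in which some edge stabs another ``beyond'' the permitted point --- contradicting (b) --- or produces a stabbing chain whose middle stabbing vertex escapes the relevant convex hull --- contradicting (c). Making the link between ``a separating direction lies in $G_j$'' and a violation of (b) or (c) precise, and checking that the arcs $A_{ab}$ therefore always leave room inside each $G_j$, is the delicate step; the rest is bookkeeping.
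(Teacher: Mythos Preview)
Your proposal has the right architecture, but the two load-bearing steps are not carried out. In the forward direction you dispatch (a) cleanly via the localization observation, but for (b) and (c) you write only that you ``expect it to reduce to a short case distinction according to the cyclic order of the directions'' --- that expectation is exactly the content of the proof, and nothing you have written forces the stabbing vertex into the required convex hull. In the reverse direction your Helly reformulation is correct and pleasant (a line of direction $d$ meets all edges iff no two edges are separated by a $d$-line), but you then acknowledge that showing $\bigcup_{a\ne b} A_{ab}$ misses every arc $G_j$ is ``the delicate step'' and stop. That step \emph{is} the theorem: it is precisely where conditions (b) and (c) must be invoked, and you have not invoked them.

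For comparison, the paper fills both gaps with concrete planar arguments rather than direction-space reasoning. For (b)/(c) it picks three spoke lines $\ell_1,\ell_2,\ell_3$ that place the six endpoints of $e,f,g$ into the six unbounded sectors, observes that the stabbing vertex $s$ of $f$ lies in the bounded piece $R$ of its sector cut off by $e$, and shows $R$ sits inside the triangle $pqt\subset\operatorname{conv}(e\cup g)$. For the converse it rotates so that the target arc $G_j$ contains the horizontal direction, colours each edge's upper endpoint red and lower endpoint blue, and proves $\operatorname{conv}(\mathcal R)\cap\operatorname{conv}(\mathcal B)=\emptyset$ by assuming a red point $r_1$ lies in $\operatorname{conv}(\mathcal B)$, picking witnesses $b_2,b_3$ (and $b_4$ in a degenerate subcase), and running a short case analysis on how the three incident edges interact --- each case violates one of (a), (b), (c). Your Helly/arc formulation is equivalent to this red/blue separation, so to finish you would still need an argument of that shape; at present the proposal identifies where the difficulty lies but does not resolve it.
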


Note that the fact that every crossing family of size $k$ induces a spoke set of size $k$ can also be derived from this result, as it shows that every crossing family is a spoke matching. However, the family of spoke matchings also matchings that are not crossing families. In fact, it is even possible to construct a crossing-free spoke matching.
In \cite{MT}, it has been shown that there are sets of $n$ points in general position that do not allow any matching satisfying conditions (a), (b) and (c) of size larger than $\frac{9}{20}n$. Hence we get the following corollary:

\begin{cor}
There are point sets $\mathcal{P}$ of $n$ points in general position that do not admit a spoke set of size larger than $\frac{9}{20}n$.
\end{cor}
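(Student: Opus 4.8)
The plan is to derive the corollary directly from Theorem~\ref{Thm:charac} together with the construction of \cite{MT}. Fix $n$ and let $\mathcal{P}$ be one of the $n$-point sets in general position guaranteed by \cite{MT} with the property that no geometric matching on $\mathcal{P}$ satisfying conditions (a), (b) and (c) has more than $\frac{9}{20}n$ edges. I claim that such a $\mathcal{P}$ cannot admit a spoke set of size larger than $\frac{9}{20}n$, which is exactly the statement.

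Suppose for contradiction that $\mathcal{P}$ has a spoke set $\mathcal{S}$ of size $k>\frac{9}{20}n$. The lines of $\mathcal{S}$ are pairwise non-parallel, so their arrangement has exactly $2k$ unbounded regions, and by the definition of a spoke set each of these regions contains at least one point of $\mathcal{P}$. Picking one such point in every unbounded region yields a subset $\mathcal{Q}\subseteq\mathcal{P}$ with $|\mathcal{Q}|=2k$ that has exactly one point in each unbounded region. Its spoke matching is then a geometric matching on $\mathcal{Q}\subseteq\mathcal{P}$ consisting of $k$ edges, one for each antipodal pair of unbounded regions.

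By the ``only if'' direction of Theorem~\ref{Thm:charac}, this spoke matching satisfies conditions (a), (b) and (c). Hence $\mathcal{P}$ admits a matching with $k>\frac{9}{20}n$ edges meeting all three conditions, contradicting the choice of $\mathcal{P}$. Therefore no spoke set of $\mathcal{P}$ has size larger than $\frac{9}{20}n$, which proves the corollary.

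I expect essentially no obstacle here: the statement is an immediate consequence of the characterization theorem, the only points to verify being the bookkeeping that a spoke set of size $k$ produces a matching with exactly $k$ edges (via the $2k$ unbounded regions and the choice of $\mathcal{Q}$), and that ``size of a matching'' in \cite{MT} refers to its number of edges so that the two bounds line up numerically. The substantive work is entirely contained in Theorem~\ref{Thm:charac}, which is already available to us.
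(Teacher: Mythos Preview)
Your argument is correct and is exactly the paper's intended derivation: combine the ``only if'' direction of Theorem~\ref{Thm:charac} with the result of \cite{MT} to rule out spoke sets larger than $\frac{9}{20}n$. The paper treats this as immediate and gives no further proof, so your write-up simply spells out the same reasoning.
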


We conclude with the proof of Theorem \ref{Thm:charac}.

\begin{proof}
Note that if the matching consists of two edges, it is easy to check that the statement is true. So in the following, we will assume that $M$ has at least three edges.
We start by showing that every spoke matching satisfies the three conditions:

Let $\mathcal{S}$ be the spoke set that gives rise to the spoke matching $M$. We start with condition (a). Pick two edges $e=(p,q)$ and $f=(r,s)$ from $M$ and two lines $\ell_1$ and $\ell_2$ from $\mathcal{S}$, with the property that one of the endpoints of $e$ and $f$ lies in each unbounded region defined by $\ell_1$ and $\ell_2$. Assume without loss of generality that $\ell_1$ is horizontal and $\ell_2$ is vertical. Let $p$ be in the bottom left region and let $r$ be in the bottom right region. Then $q$ is in the top right region and $s$ is in the top left region. Assume without loss of generality that $e$ intersects the top left region and let $T_1$ be the triangle bounded by $e$, $\ell_1$ and $\ell_2$. If $s$ lies in $T_1$, then $f$ stabs $e$. Assume without loss of generality that $f$ intersects the top right region and let $T_2$ be the triangle bounded by $f$, $\ell_1$ and $\ell_2$. If $q$ lies in $T_2$, then $e$ stabs $f$. If $s$ does not lie in $T_1$ and $q$ does not lie in $T_2$, then $e$ and $f$ cross. Thus any two edges in $M$ are either crossing or stabbing, and thus $M$ satisfies condition (a).

Now we show that $M$ satisfies conditions (b) and (c) by proving that for three edges $e$, $f$ and $g$, with $f$ stabbing $e$, the stabbing vertex of $f$ lies in the convex hull of $e$ and $g$. So, pick three edges $e=(p,q)$, $f=(r,s)$ and $g=(t,u)$ from $M$ and three lines $\ell_1$, $\ell_2$ and $\ell_3$ from $\mathcal{S}$, with the property that one of the endpoints of $e$, $f$ and $g$ lies in each unbounded region defined by $\ell_1$, $\ell_2$ and $\ell_3$. Assume without loss of generality that $f$ stabs $e$ with stabbing vertex $s$. Let $A_1,\ldots,A_6$ be the unbounded regions defined by $\ell_1$, $\ell_2$ and $\ell_3$, and assume without loss of generality that $s\in A_1$, $p\in A_2$, $t\in A_3$, $r\in A_4$, $q\in A_5$ and $u\in A_6$. Let $\ell_1$ be the line separating $A_1$, $A_2$ and $A_3$ from $A_4$, $A_5$ and $A_6$. Let $\ell_2$ be the line separating $A_2$, $A_3$ and $A_4$ from $A_1$, $A_5$ and $A_6$. Finally, let $\ell_3$ be the line separating $A_3$, $A_4$ and $A_5$ from $A_1$, $A_2$ and $A_6$.

As $f$ stabs $e$ with stabbing vertex $s$, the edge $e$ must intersect $A_1$. Let $R$ be the part of $A_1$ that is bounded by $e$ and note that $s$ lies in $R$. Let $H$ be the convex hull of $e$ and $g$. We will show that $R\subset H$. Consider the edge $(q,t)$. This edge does not intersect the line $\ell_3$ as $q\in A_5$ and $t\in A_3$. Similarly, the edge $(p,t)$ does not intersect the line $\ell_2$. Let $T$ be the triangle defined by the edges $(q,t)$, $(p,t)$ and $e=(p,q)$. As $(q,t)$ does not cross $\ell_3$ and $(p,t)$ does not cross $\ell_2$, we deduce that $R\subset T$. Clearly $T\subset H$. Thus we see that $R\subset H$ and as $s$ lies in $R$, $s$ also lies in $H$. Thus for any three edges $e$, $f$ and $g$, with $f$ stabbing $e$, the stabbing vertex of $f$ lies inside the convex hull of $e$ and $g$. This proves that $M$ satisfies the conditions (b) and (c). See Figure \ref{Fig:BoseSuff} for an illustration of this argument.

\begin{figure}
\centering
\includegraphics[scale=1]{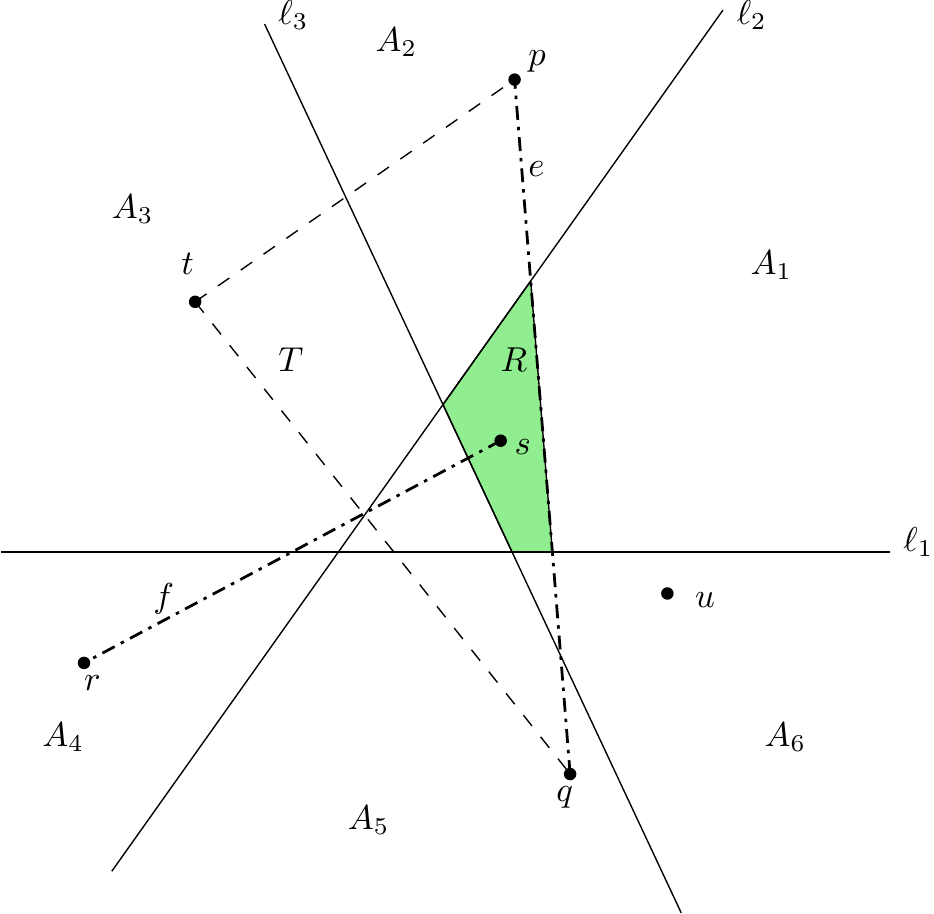}
\caption{An illustration of the proof that every spoke matching satisfies the conditions (b) and (c).}
\label{Fig:BoseSuff}
\end{figure}

It remains to show that any matching that satisfies conditions (a), (b) and (c) is indeed a spoke matching, i.e.\ there is a set $\mathcal{S}$ of pairwise non-parallel lines such that in every unbounded region there is an endpoint of $M$ and each edge of $M$ intersects all lines of $\mathcal{S}$. Let $m_1, \ldots, m_k$ be the edges of the matching, ordered by slope. Assume without loss of generality that no edge has slope $0$. We show that for any $i\leq k$, we can find a line $\ell_i$ whose slope is between the slopes of $m_i$ and $m_{i+1}$ (where $m_{k+1}=m_1$) and which intersects all edges $m_1, \ldots, m_k$. Without loss of generality, let $m_i$ be the last edge that has negative slope and let $m_{i+1}$ be the first edge that has positive slope. For each edge in $M$, color the upper endpoint red and the lower endpoint blue. Let $\mathcal{R}$ be the set of red points and let $\mathcal{B}$ be the set of blue points. We want to show that we can separate the blue points and the red points by a line or, equivalently, that the convex hulls of $\mathcal{R}$ and $\mathcal{B}$ are disjoint.

Assume for the sake of contradiction that there is a red point $r_1$ inside the convex hull of $\mathcal{B}$. Then, there must be two blue points $b_2$ and $b_3$, such that $b_2$ is to the left of $r_1$, $b_3$ is to the right of $r_1$, and $r_1$ lies below the line through $b_2$ and $b_3$. Let $b_1$ be the blue point that is matched with $r_1$ in $M$, and let $r_2$ and $r_3$ be the red points that are matched with $b_2$ and $b_3$, respectively. Assume first that $b_1\neq b_2$ and $b_1\neq b_3$. For $j\in\{1,2,3\}$, let $e_j$ be the edge between $b_j$ and $r_j$. Note that as $b_j$ lies below $r_j$ for every $j\in\{1,2,3\}$, the line through $b_2$ and $b_3$ separates $e_1$ from $e_2$ and $e_3$. Thus $e_1$ cannot cross any of these edges. If $e_1$ is parallel to $e_2$ or $e_3$, we have a contradiction to condition (a). If $e_1$ stabs both $e_2$ and $e_3$, we have a contradiction to condition (b). If $e_1$ stabs $e_2$ and gets stabbed by $e_3$, or vice versa, we have a contradiction to condition (c). So assume that $e_1$ gets stabbed by both $e_2$ and $e_3$. Let $s_2$ and $s_3$ be the supporting lines of $e_2$ and $e_3$, respectively. As $e_2$ stabs $e_1$, by definition $s_2$ intersects $e_1$. This implies that $r_1$ must lie above $s_2$. By the same argument, $r_1$ must also lie above $s_3$. In particular, the intersection of $s_2$ and $s_3$ lies below $r_1$ and therefore below the line through $b_2$ and $b_3$. Hence $s_2$ does not intersect $e_3$ and $s_3$ does not intersect $e_2$. But this implies that $e_2$ and $e_3$ are parallel, which is a contradiction to condition (a). See Figure \ref{Fig:BoseNec1} for an illustration of this argument.

\begin{figure}
\centering
\includegraphics[scale=0.7]{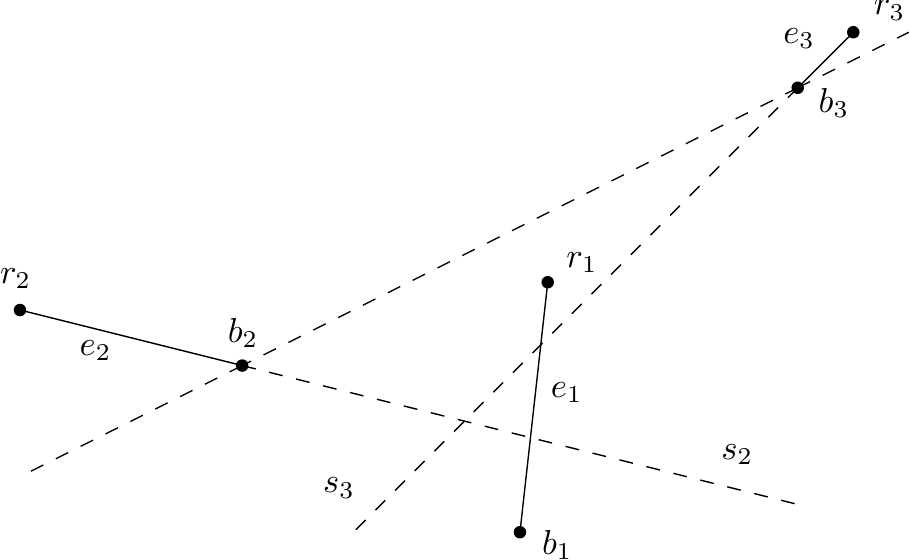}
\caption{An illustration of the proof that every matching that satisfies the conditions (a),(b) and (c) is a spoke matching.}
\label{Fig:BoseNec1}
\end{figure}

Assume now without loss of generality that $b_1=b_2$. We can further assume that $b_2$ is the only point of $\mathcal{B}$ to the left of $r_1$, as otherwise we could just choose another point of $\mathcal{B}$ as $b_2$. Let $b_4$ be another blue point such that $r_1$ lies inside the convex hull of $b_2$, $b_3$ and $b_4$. Let $r_3$ and $r_4$ be the red points that are matched with $b_3$ and $b_4$, respectively, and let $e_3$ and $e_4$ be the respective segments. Denote the segment between $r_1$ and $b_2$ by $e_1$. Again, let $s_i$ denote the supporting line of $e_i$. As $r_1$ is matched with $b_2$, $b_2$ has to lie below $r_1$. As $r_1$ lies below the line through $b_2$ and $b_3$, this implies that $b_3$ lies above $r_1$ and the line through $b_2$ and $b_3$ separates $e_1$ and $e_3$. In particular, $e_1$ and $e_3$ cannot cross and $e_1$ cannot stab $e_3$. As $e_1$ and $e_3$ being parallel would be a contradiction to condition (a), we assume that $e_3$ stabs $e_1$. Note that in this case $b_3$ has to be the stabbing vertex and thus $r_3$ has to be to the right of $b_3$. Now consider $e_4$. If $e_4$ is parallel to $e_3$ or $e_1$, we have a contradiction to condition (a). If $e_4$ stabs $e_1$, then $s_4$ intersects $e_1$. In particular, $e_4$ cannot cross or stab $e_3$, thus it must get stabbed by $e_3$. But $b_3$ cannot be in the convex hull of $e_1$ and $e_3$, so we have a contradiction to condition (b). By a similar argument we get a contradiction to condition (c) if $e_4$ stabs $e_3$. So assume that $e_4$ gets stabbed by both $e_1$ and $e_3$. But then $s_4$ separates $e_1$ and $e_3$, and again $b_3$ is not in the convex hull of $e_1$ and $e_4$, which is again a contradiction to condition (a).

Thus, by contradiction, the convex hull of $\mathcal{R}$ and the convex hull of $\mathcal{B}$ are disjoint, i.e.\ they can be separated by a line $\ell_i$. By our choice of $\mathcal{R}$ and $\mathcal{B}$, this line intersects all edges of $M$ and its slope is between the slopes of $m_i$ and $m_{i+1}$.

Finally, $\mathcal{S}:=\{\ell_1, \ldots, \ell_k\}$ is indeed a spoke set: The condition that each line of $\mathcal{S}$ crosses each edge of $M$ ensures that each endpoint of an edge in $M$ lies in an unbounded region of the arrangement determined by $\mathcal{S}$. Also, as each $\ell_i$ has a slope between the slopes of $m_i$ and $m_{i+1}$, no two endpoints are in the same region of the arrangement. As $|\mathcal{S}|=|M|$, there has to be exactly one endpoint in each unbounded region.
\end{proof}

\section*{Conclusion}
We have shown that every set of $n$ points in general position contains a spoke set of size $\sqrt{\frac{n}{8}}$. This bound is better than the one immediately derived from crossing families. For some constructions where crossing families are used, spoke sets are actually enough, so this result also improves some other bounds, e.g.\ the number of crossing-free spanning trees needed to cover the edge set of a complete geometric graph \cite{Bose}.

A question that remains open is of course whether it is always possible to find a spoke set of linear size. Using our approach, it would be enough to find two subsets $\mathcal{A}$ and $\mathcal{B}$ of linear size, that can be separated by a line and such that $\mathcal{A}$ avoids $\mathcal{B}$. As mentioned above, it is known that this is not always possible for mutually avoiding sets, but it is open whether it is possible for the more general case that we need. Alternatively, one could try to construct long spoke paths for arbitrary line arrangements.

Another interesting question is whether it is always possible to find a crossing family of size linear in the size of the largest spoke set. Theorem \ref{Thm:charac} is a first step in this direction as it shows that even though spoke matchings are different from crossing families (and could in fact even be crossing-free), they still satisfy a number of conditions.

\label{Sec:conclusion}

\bibliography{refs.bib}
\bibliographystyle{plain}

\end{document}